\newtheorem*{theorem*}{Theorem} 
\newtheorem{fact}{Fact}
\newtheorem{theorem}{Theorem}
\theoremstyle{definition}
\newtheorem{definition}{Definition}
\theoremstyle{remark}
\newtheorem{remark}{Remark}
\begin{document}


\title{\textbf{Sample-optimal single-copy quantum state tomography with shallow-depth measurements} 
}%

\author{Gyungmin Cho}
 \email{km950501@snu.ac.kr}
 \affiliation{NextQuantum Center, Department of Physics and Astronomy, and Institute of Applied Physics, Seoul National University, Seoul 08826, Korea}
\author{Dohun Kim}%
 \email{dohunkim@snu.ac.kr}
 \affiliation{NextQuantum Center, Department of Physics and Astronomy, and Institute of Applied Physics, Seoul National University, Seoul 08826, Korea}%


\date{\today}
\begin{abstract}
    Quantum state tomography (QST) is a central task in quantum information, and its efficiency is commonly characterized by sample complexity. Although collective measurements on multiple copies achieve optimal performance, they are difficult to implement on near-term devices, motivating the study of single-copy approaches. Here, we introduce a ancilla-free single-copy QST protocol based on logarithmic-depth local circuits on an $n$-qubit system. For rank-$r$ states in dimension $d=2^n$, our protocol achieves trace-norm error $\epsilon$ using $\mathcal{O}(dr^2\log d/\epsilon^2)$ copies, matching the single-copy lower bound up to a logarithmic factor. For full-rank mixed states, it removes this logarithmic overhead and achieves the optimal scaling $\mathcal{O}(d^3/\epsilon^2)$, with nearly optimal classical runtime for explicit matrix output. These results show that sample-optimal QST can be realized using experimentally accessible shallow-depth measurements.
\end{abstract}

\maketitle

\section{INTRODUCTION}
In quantum mechanics, the state space of an $n$-qubit system is given by the tensor product of the Hilbert spaces of the individual qubits, in accordance with the postulate for composite systems \cite{Nielsen_Chuang_2010}. 
Thus, an $n$-qubit quantum state is represented by a positive semidefinite
matrix $\rho \in \mathbb{C}^{d\times d}$ with unit trace, where $d = 2^n$, and admits
the spectral decomposition $\rho = \sum_{i=1}^{r} \lambda_i \ket{\psi_i}\bra{\psi_i}$,
with rank $r \le d$. If $r=1$, the state is called a \emph{pure state}, otherwise, it is a \emph{mixed state}. 

Although a quantum state is mathematically represented by a density matrix, the density matrix itself is not directly accessible in experiments. 
Instead, information about the state is inferred from measurement outcomes. 
In the measurement models considered here, this information is obtained by preparing independent copies of the state and measuring them under different measurement settings, with repeated measurements used to control statistical fluctuations. 
The task of reconstructing the density matrix from such measurement data is called quantum state tomography (QST).
In this context, the number of prepared copies of the quantum state $\rho$, referred to as the \emph{sample complexity}, serves as a standard figure of merit for QST.

Since the number of parameters required to describe a state $\rho$ grows exponentially with $n$, the sample complexity required for QST also scales exponentially.
As a result, fully reconstructing the entire quantum state becomes infeasible for large $n$. Nevertheless, in the early stages of quantum computer development, QST has been widely employed for device benchmarking \cite{qst_ion, qst_nv, qst_super}. 

In some cases, however, it is sufficient to perform QST not on the entire $n$-qubit system 
but only on a specific subsystem. For example, in $t$-doped stabilizer state learning, once the 
stabilizer part is learned, full QST is required only on a certain subsystem \cite{hangleiter2024bell, CGYZ25, grewal2024improved, leone2024learning}. Likewise, in tasks where one assumes access to a target quantum state and seeks to approximate the circuit that prepares it \cite{landau2025learning}, subsystem QST is used as a subroutine. Therefore, while the exponential sample complexity required for QST cannot be avoided, achieving the optimal sample complexity remains highly beneficial for a wide range of applications.

The accuracy of QST can be quantified using different metrics, among which fidelity and trace distance are most common \cite{haah2016sample, chen2023does}.
Here, we adopt the trace distance and formalize the QST problem as follows:

\begin{definition} $(\epsilon, \delta)$-QST.
Given $\epsilon > 0$ and $\delta \in (0,1)$, let $\rho$ denote the target quantum state and 
$\hat{\rho}_T$ its estimator. 
The goal is to determine a bound on the number of samples $T$ such that 
\[
\Pr\!\left( \|\hat{\rho}_T - \rho\|_{\mathrm{tr}} \geq \epsilon \right) \leq \delta .
\]
\end{definition}

For QST of a rank-$r$ quantum state, it is known that multi-copy measurements achieve the optimal sample complexity \cite{haah2016sample, o2016efficient} given by
\[
T = \Theta\big(dr / \epsilon^2 \big).
\]
However, current quantum computers are still noisy and lack error correction, which limits the number of copies that can be used simultaneously. 
Consequently, implementing such an optimal QST strategy is highly challenging.  

Follow-up research~\cite{chen2024optimal} has investigated scenarios in which there is a restriction on 
the number of copies $t$ that can be prepared and measured simultaneously. 
It has been shown that in such cases one needs $T=\widetilde{\Theta}(d^3/\sqrt{t}\epsilon^2)$ in the full rank case, 
meaning that the benefit is only achieved once $t$ becomes sufficiently large. 
Considering that even implementing 2-copy measurements ($t=2$) is difficult on present-day quantum hardware, the practical limitations of QST using multi-copy measurements become evident.  

Compared to the multi-copy measurements, a more practical approach 
is to employ single-copy measurements. 
In this case, the optimal sample complexity of QST is known to be $\Theta(dr^2/\epsilon^2)$. However, implementing this strategy requires somewhat complicated positive-operator valued measurements (POVMs) \cite{haah2016sample, kueng2017low} or projections 
onto random Haar states \cite{guctua2020fast}. By performing projective measurements drawn from a state 2-design \cite{Lowe2025lower}, a near-optimal strategy---incurring an additional $\ln(d)$ factor---can be obtained.
Nevertheless, even this scheme demands circuits of depth ${\cal O}(n)$. Considering device connectivity and noise, such requirements remain impractical on current noisy quantum hardware.

In this letter, motivated by recent advances \cite{schuster2025random} in constructing approximate unitary designs with shallow-depth circuits, we investigate whether shallow circuits can attain the optimal sample complexity for single-copy measurements [Fig.~\ref{fig1}]. A detailed discussion of related work is provided in Appendix~\ref{appx:related_works}.

\begin{figure}[t]
    \centering
    \includegraphics[width=1\linewidth, trim=2.5cm 5.7cm 4.5cm 5.5cm, clip]{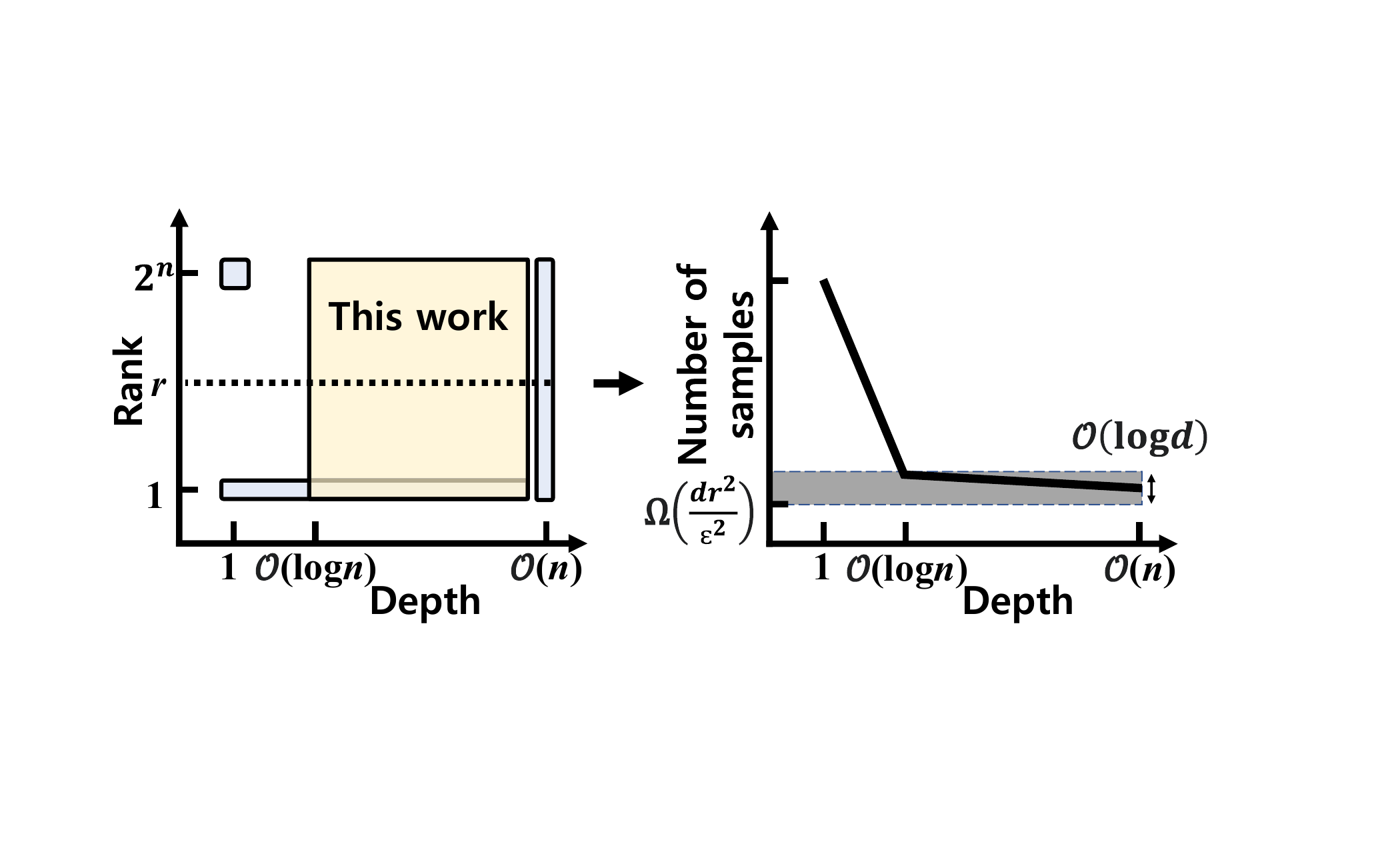}
    \caption{(Left) Previously, near-optimal quantum state tomography (QST) was studied using either the $n$-qubit Clifford group $\mathrm{Cl}(n)$ or Pauli measurements (blue). Here, we show that shallow-depth circuits extend the near-optimal region (yellow), and that in the full-rank case $r=d$ they become optimal.
    (Right) We compute an upper bound on the number of samples required for QST, and show that with circuit depth $\mathcal{O}(\log n)$, this bound is close to the single-copy lower bound $\Omega(dr^2/\epsilon^2)$, up to a $\ln d$ factor.}
    \label{fig1}
\end{figure}

Previous studies have shown that, by using the biased estimator $\hat{\rho}_{\text{biased}}={\cal M}^{-1}_{\text{Haar}}(U^{\dagger}\ket{b}\!\bra{b}U)=(d+1)U^{\dagger}\ket{b}\!\bra{b}U-I$,
one can estimate the fidelity $\mathrm{Tr}(\ket{\psi}\!\bra{\psi}\rho)$ to within an additive error $\epsilon$ using only ${\cal O}(1/\epsilon^2)$ samples \cite{schuster2025random}.
However, applying this approach to QST poses challenges. 
Specifically, if QST guarantees a small trace norm 
$\|\rho - \hat{\rho}\|_\text{tr} \leq \epsilon$, 
then for any operator $O$ one must satisfy
$\big|\mathrm{Tr}(O\rho) - \mathrm{Tr}(O\hat{\rho})\big| 
    \leq \|O\|_{\text{op}} \, \epsilon $.
In contrast, there are simple counterexamples showing that with the biased estimator this guarantee cannot be maintained at depth ${\cal O}(\log n)$, and therefore a different estimator is required for QST [see Appendix \ref{appx:biased}].

Constructing an unbiased estimator for $\rho$ is not technically difficult in itself, 
but proving the concentration of its empirical average is nontrivial. 
In this work, instead of relying on approximate unitary designs, we leverage 
the same circuit ansatz used in prior studies \cite{schuster2025random, cho2025entanglement} and provide a proof of concentration 
for the unbiased estimator. 
Technically, our proof combines the Weingarten calculus \cite{collins2022weingarten} with the transfer matrix method  \cite{transfer}. This method may also be useful for various analyses employing shallow randomized measurements. 

\section{PRELIMINARIES}
We now define the variables and basic concepts required for the subsequent discussion. 

First, to construct an unbiased estimator of $\rho$, we introduce the shadow channel ${\cal M}$~\cite{huang2020predicting}, defined by
\begin{equation}
    \mathcal{M}(\rho) = \mathbb{E}_{U\sim{\cal E},b}U^{\dagger}\ket{b}\!\bra{b}U,
\end{equation}
where $\mathcal{E}$ is the unitary ensemble from which $U$ is uniformly sampled and $b\in{\{0,1\}}^n$ is the bitstring of the measurement outcome. 
Here, we use a unitary ensemble consisting only of Clifford gates \cite{cho2025entanglement}. 
Then, each Pauli operator $P\in\{I,X,Y,Z\}^n$ is an eigenoperator of the shadow channel $\mathcal{M}(P) = m_P P$,
where $m_P = \mathbb{E}_{U \sim \mathcal{E}}\!\left[\mathbf{1}\{UPU^{\dagger}\in\pm{\cal Z}\}\right]$, with $\mathbf{1}\{\text{True}\}=1$ (and $0$ otherwise), and ${\cal Z}=\{I,Z\}^n$.
Using the shadow channel $\mathcal{M}$, the unbiased estimator of the quantum state $\rho$ is given by
\begin{equation}\label{eq:estimator_QST1}
    \hat{\rho} = \mathcal{M}^{-1}\big(U^{\dagger} \ket{b}\!\bra{b} U\big).
\end{equation}
If the unitary ensemble $\mathcal E$ is informationally complete, then ${\cal M}^{-1}$ exists and $\mathcal{M}^{-1}(P) = m_P^{-1} P$. 
For a given unitary ensemble $\mathcal{E}$, we define the Pauli correlation function $\tau(P, Q)$~\cite{bertoni2024shallow} as $\tau(P, Q; n) = \mathbb{E}_{U\sim {\cal E}}\!\left[\mathbf{1}\{UPU^{\dagger}\in\pm{\cal Z}\}\mathbf{1}\{UQU^{\dagger}\in\pm{\cal Z}\}\right]$,
where $n$ denotes the system size on which Pauli operators $P$ and $Q$ are defined. We omit $n$ when clear from context.

For $\mathcal{E} = \mathrm{Cl}(k)^{\otimes n/k}$, $\tau(P, Q;n)$ is blockwise multiplicative, written as
$\tau(P, Q;n) = \prod_{i=1}^{n/k} \tau(P[i], Q[i];k),$
where $P[i]$ and $Q[i]$ denote the Pauli operators acting on the $i$-th block. 
Throughout this paper, \textit{log} denotes the logarithm to base 2, while \textit{ln} denotes the natural logarithm. We write \([n]\) for the set \(\{1,2,\ldots,n\}\), and $\|A\|_{\text{op}} = \max_{i}(\sigma_i(A))$, $\|A\|_{\mathrm{F}} = \sqrt{\sum_i \sigma_i(A)^2}$ and $\|A\|_{\mathrm{tr}} = \sum_i \sigma_i(A)$, where $\{\sigma_i(A)\}$ are the singular values of $A$.

\section{UNITARY ENSEMBLE}
\begin{figure}[t]
    \centering
    \includegraphics[width=0.7\linewidth, trim=4cm 7.5cm 9cm 3.5cm, clip]{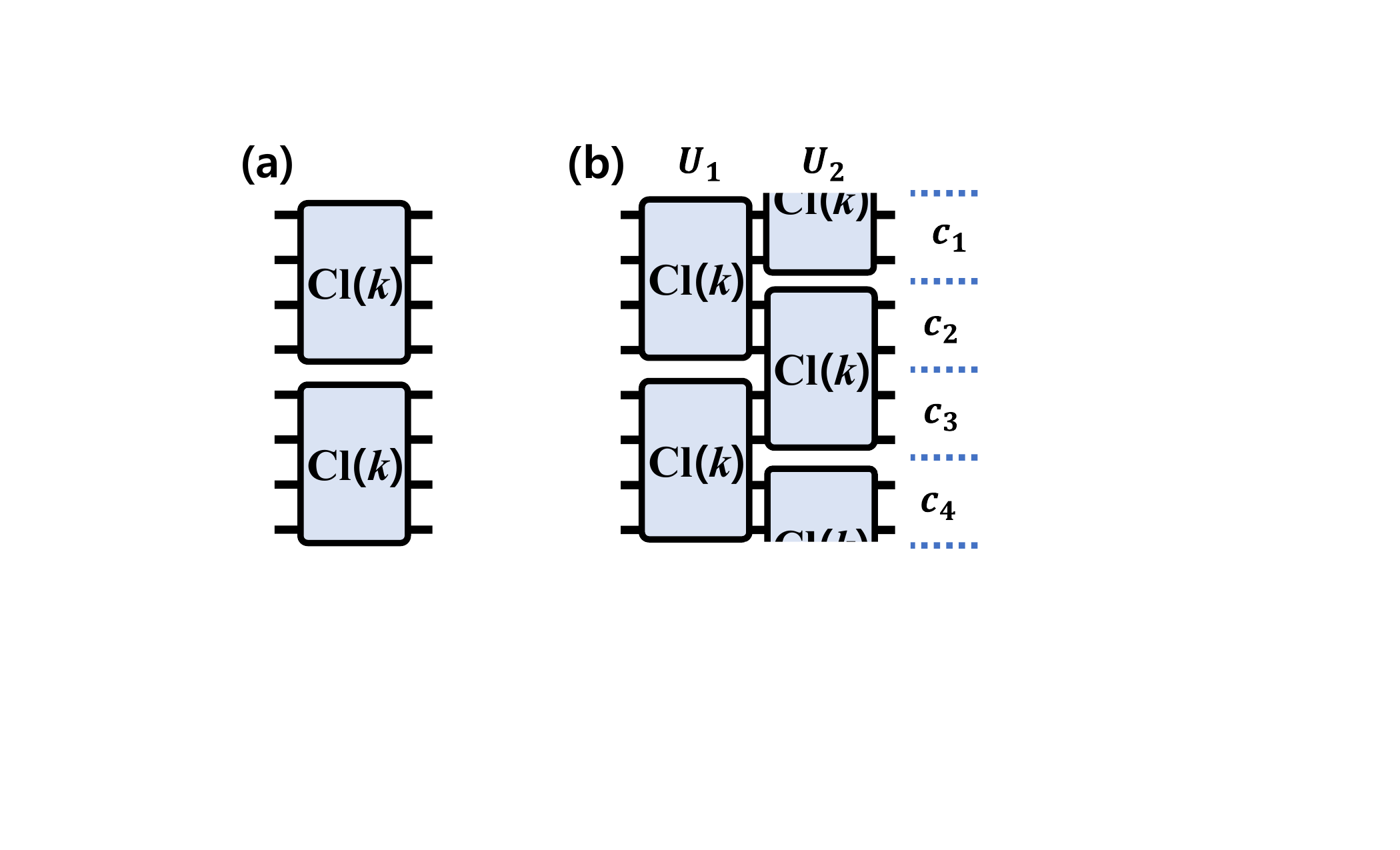}
    \caption{(a) Block random unitary ensemble, where each block is uniformly sampled from $\text{Cl}(k)$. (b) Two-layer block random unitary ensemble, with $U_1$ followed by $U_2$ on the staggered blocks, yielding $U = U_2 U_1$. The figures show the case $n=8$, $k=4$, and $m=n/k=2$. In the main text, a single block in (b) is further divided into two sub-blocks, denoted, for example, as $c_1, \ldots, c_4$.}
    \label{fig2}
\end{figure}
We consider two unitary ensembles generated by shallow-depth local circuits. 
The first is the block Clifford ensemble
$\mathcal{E}_1=\mathrm{Cl}(k)^{\otimes n/k}$ [Fig.~\ref{fig2}(a)], 
where each $k$-qubit block is independently sampled from the Clifford group. 
The second is a two-layer brickwork ensemble, obtained by applying two layers of $k$-qubit Clifford blocks in a staggered pattern [Fig.~\ref{fig2}(b)]. 
Writing the two layers as $U_1$ and $U_2$, the overall unitary is $U=U_2U_1$. 
In both cases, the circuit can be implemented by a local quantum circuit of depth $\mathcal{O}(k)$~\cite{bravyi2021clifford}. 
Thus, choosing $k=\mathcal{O}(\log n)$ gives logarithmic-depth local circuits.

Throughout this paper, we assume that $k$ divides $n$, so that there are $m=n/k$ blocks in each layer. 
We also assume that $k$ is even, allowing each block to be divided into two sub-blocks of size $k/2$.

For the unitary ensemble ${\cal E} = \mathrm{Cl}(k)^{\otimes n/k}$, $m_P = (2^k+1)^{-w_k(P)}$,
where $w_k(P)$ denotes the number of blocks in which the Pauli operator $P$ acts nontrivially. For the ensemble corresponding to Fig.~\ref{fig2}(b), $m_P$ can also be explicitly calculated [See Appendix~\ref{twolayer}], but for our purpose its lower bound suffices, $m_P\geq(2^k+1)^{-n/k}$.
When using the unitary shown in Fig.~\ref{fig2}(b), we introduce the notation $w_{k,1(2)}(P)$ to denote the number of blocks on which $P$ acts nontrivially in the blocks of the $U_{1(2)}$ layer. The $\tau_{1(2)}$ are defined analogously.

In both types of circuit architectures considered [Fig.~\ref{fig2}], we restrict to Clifford gates.
Then, the following two properties hold~\cite{cho2025entanglement}:
\begin{itemize}
    \item[(i)] 
    \((U, P, b)
    =
    (U, P, b)\mathbf{1}\{UPU^{\dagger}\in\pm{\cal Z}\}\).

    \item[(ii)]
    \(\sum_{b}\prod_i (U,P_i,b)
    =
    2^n\delta_{\prod_i P_i,\pm I}
    \prod_i\mathbf{1}\{UP_iU^{\dagger}\in\pm{\cal Z}\}\).
\end{itemize}
Here, \((U,P,b)\coloneqq
\operatorname{Tr}(UPU^{\dagger}\ket{b}\!\bra{b})\).

\section{SAMPLE COMPLEXITY ANALYSES}
Our main technical results consist of two parts.
First, for a rank-$r$ quantum state, we show that local random quantum circuits of depth ${\cal O}(\log n)$ achieve a sample complexity of 
${\cal O}(dr^2 \ln(d)/\epsilon^2)$, matching the known lower bound $\Omega(dr^2/\epsilon^2)$ up to a logarithmic factor~\cite{Lowe2025lower}.
Second, when no low-rank promise is available, we analyze the full-rank worst-case regime $r=d$. In this case, we prove that the optimal sample complexity $\Theta(d^3/\epsilon^2)$ can be achieved, again using local random quantum circuits of depth ${\cal O}(\log n)$. 

In the rank-$r$ case, we establish concentration of the estimator via the matrix Bernstein inequality \cite{guctua2020fast}, while in the full-rank case ($r=d$) we apply McDiarmid’s inequality \cite{acharya2025pauli1}. From these, we obtain the following two theorems.
\begin{theorem}[Rank $r$]\label{thm1:rankr}
    If the quantum state $\rho$ has rank $r$, then using the measurement circuit of Fig.~\ref{fig2}(b) with depth ${\cal O}(\log(n))$, 
    one can perform $(\epsilon, \delta)$-QST provided the number of samples $T = {\cal O}\left(dr^2\log(d/\delta)/\epsilon^2\right)$.
\end{theorem}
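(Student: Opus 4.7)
The plan is to apply the matrix Bernstein inequality to the i.i.d.\ sum $\hat{\rho}_T - \rho = \frac{1}{T}\sum_{i=1}^{T}(\hat{\rho}_i-\rho)$ to control $\|\hat{\rho}_T-\rho\|_{\mathrm{op}}$, and then to convert this operator-norm bound into a trace-norm bound using the rank-$r$ structure of $\rho$. Matrix Bernstein requires an almost-sure bound $L \geq \|\hat{\rho}_i-\rho\|_{\mathrm{op}}$ and a variance proxy $\sigma^2 \geq \|\mathbb{E}[(\hat{\rho}_i-\rho)^2]\|_{\mathrm{op}}$, after which $\|\hat{\rho}_T-\rho\|_{\mathrm{op}} \lesssim \sqrt{\sigma^2 \log(d/\delta)/T} + L\log(d/\delta)/T$ holds with probability at least $1-\delta$. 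For the conversion, I would truncate $\hat{\rho}_T$ to its best rank-$r$ Hermitian approximation $\hat{\rho}_T^{(r)}$: since $\rho$ has rank $r$, the difference $\hat{\rho}_T^{(r)}-\rho$ has rank at most $2r$, so the standard two-step estimate $\|\hat{\rho}_T^{(r)}-\rho\|_{\mathrm{tr}} \leq 2r\,\|\hat{\rho}_T^{(r)}-\rho\|_{\mathrm{op}} \leq 4r\,\|\hat{\rho}_T-\rho\|_{\mathrm{op}}$ holds. Driving the right-hand side below $\epsilon$ therefore costs $T = \mathcal{O}(\sigma^2 r^2 \log(d/\delta)/\epsilon^2)$, and the target $\mathcal{O}(dr^2\ln(d)/\epsilon^2)$ follows once I prove the uniform variance bound $\sigma^2 = \mathcal{O}(d)$; the $\ln d$ factor in the final complexity originates from the Bernstein tail, not from the variance itself.

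To compute the variance, I would expand the snapshot as $\hat{\rho} = \frac{1}{d}\sum_P m_P^{-1}(U,P,b)\,P$. On the support of the two Clifford indicators, $UPU^{\dagger}$ and $UQU^{\dagger}$ are simultaneously diagonal, so $(U,P,b)(U,Q,b) = \langle b|UPQU^{\dagger}|b\rangle$, and averaging $b$ against the Born weights $\langle b|U\rho U^{\dagger}|b\rangle$ then yields $\mathrm{Tr}(\rho PQ)\,\mathbf{1}\{\cdots\}$. After the change of variables $R = PQ$, the variance collapses to a Pauli sum
\begin{equation*}
    \mathbb{E}[\hat{\rho}^2] = \frac{1}{d^2}\sum_{P,R}\frac{\tau(P,R)}{m_P\,m_{PR}}\,\mathrm{Tr}(\rho R)\,R,
\end{equation*}
with $\tau$ the correlation function of Section~II. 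The remaining job is to bound the operator norm of this sum uniformly in $\rho$ by $\mathcal{O}(d)$.

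This last step is the principal technical obstacle and is where the transfer-matrix method enters, because the ensemble of Fig.~2(b) is not a $2$-design and standard Haar-moment shortcuts are unavailable. For the two-layer brickwork both $m_P$ and $\tau(P,R)$ factorize over the sub-blocks of size $k/2$ within each Clifford layer, but the staggered layers couple neighbouring sub-blocks, so direct Pauli enumeration is intractable. My plan is to label each sub-block by a small ``Pauli type''---recording, e.g., whether it lies in $\pm\mathcal{Z}$ and its overlap with the block of the opposing layer---evaluate the per-block factors of $m_P^{-1}$, $m_{PR}^{-1}$, and $\tau(P,R)$ via Weingarten calculus, and assemble them into a constant-dimensional transfer matrix acting sequentially along the $m = n/k$ block positions. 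The operator norm of $\mathbb{E}[\hat{\rho}^2]$ then reduces to a contracted product of $\mathcal{O}(m)$ copies of this transfer matrix, and the bound $\sigma^2 = \mathcal{O}(d)$ follows from controlling its Perron eigenvalue, which I expect to saturate at the single-block Haar value $\sim 2^k$ and hence yield $\sigma^2 \sim (2^k)^{n/k} = d$. The a.s.\ bound $L = \mathrm{poly}(d)$ follows directly from the Pauli expansion together with $m_P^{-1} \leq (2^k+1)^{n/k}$; at the target accuracy $\epsilon/(4r)$ it is sub-dominant in the Bernstein balance, and assembling the three pieces yields the claimed $T = \mathcal{O}(dr^2\ln(d)/\epsilon^2)$.
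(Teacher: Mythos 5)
Your overall architecture matches the paper's: matrix Bernstein for $\|\hat{\rho}_T-\rho\|_{\mathrm{op}}$, a rank-$2r$ truncation to pass to trace norm, the a.s.\ bound $\|\hat{\rho}\|_{\mathrm{op}}\leq(2^k+1)^{n/k}=\mathcal{O}(d)$, and a transfer matrix along the $n/k$ blocks to control the variance. Your reduction of the second moment to $\mathbb{E}[\hat{\rho}^2]=\frac{1}{d^2}\sum_{P,R}\frac{\tau(P,R)}{m_P m_{PR}}\,\mathrm{Tr}(\rho R)\,R$ is also correct. The gap is in the very last step you defer: you propose to bound $\|\mathbb{E}[\hat{\rho}^2]\|_{\mathrm{op}}$ by feeding the per-block factors of $\tau$ and $m^{-1}$ into a scalar transfer matrix, but that computes (a bound on) $\sum_{P,R}|c_{P,R}\,\mathrm{Tr}(\rho R)|$, i.e.\ the triangle inequality over the Pauli expansion, and this provably does not give $\mathcal{O}(d)$. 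Already for the exact global Clifford ensemble, where $\mathbb{E}[\hat{\rho}^2]=(d-1)\rho+dI$ has operator norm $\mathcal{O}(d)$, the coefficient sum is $\frac{1}{d^2}\sum_P\frac{\tau(P,R)}{m_Pm_{PR}}=\Theta(1)$ for each $R$, so $\sum_{P,R}|c_{P,R}\mathrm{Tr}(\rho R)|=\Theta(1)\cdot\sum_R|\mathrm{Tr}(\rho R)|$, which is $\Theta(d^{3/2})$ for a Haar-random pure state (and $\Theta(d^2)$ if you crudely use $|\mathrm{Tr}(\rho R)|\leq 1$). The $\mathcal{O}(d)$ bound relies on cancellations in the operator sum $\sum_R\mathrm{Tr}(\rho R)R=d\rho$ that a coefficient-only transfer matrix cannot see; carried through Bernstein, the lossy bound costs you at least a $\sqrt{d}$ factor and you land at $\mathcal{O}(d^{3/2}r^2/\epsilon^2)$ or worse, missing the claim.

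The missing idea, which is how the paper makes the transfer matrix legitimate, is to exploit the exact $2$-design property of the first layer \emph{before} taking any norms. Using ${\cal M}^{-1}(U_1^{\dagger}OU_1)=U_1^{\dagger}{\cal M}^{-1}(O)U_1$ for $U_1\in\mathrm{Cl}(k)^{\otimes n/k}$, one writes $\hat{\rho}^2=U_1^{\dagger}{\cal M}^{-1}(U_2^{\dagger}\ket{b}\!\bra{b}U_2)^2U_1$ and keeps the average over $U_1$ until the end; the per-block $2$-design twirl then collapses the entire sum over Paulis $R$ supported on a given block set $A$ into a single operator $\sum_{B\subseteq A}(4^k-1)^{-|A|}(-1)^{|A|-|B|}(2^k)^{|B|}\rho_B\otimes I_{A-B}$, whose operator norm is at most $(2^k-1)^{-|A|}$ uniformly in $\rho$. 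Only after this collapse does the remaining sum over $(P,Q)$ factorize into per-sub-block scalars, yielding $(1+2^{-k})^{2m}\,\mathrm{Tr}((FG)^m)$ with explicit $2\times2$ matrices $F$ (from the depolarizing factor) and $G$ (from $\tau_2$), whose Perron eigenvalue $\lambda_+\leq 2^k+\sqrt{5\cdot 2^k}$ gives $\mathcal{O}(2^n)$ once $k2^{k/2}\geq n$. You should restructure your variance computation around this twirl of the $U_1$ layer rather than around the fully reduced Pauli sum; the rest of your argument then goes through as written.
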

\begin{proof}[Proof sketch]
    We use the unbiased estimator $\hat{\rho}=
        \mathcal{M}^{-1}
        \!\left(
        U_1^{\dagger}U_2^{\dagger}\ket{b}\!\bra{b}U_2U_1
        \right)$.
    Since the shadow channel is diagonal in the Pauli basis, the inverse channel amplifies each Pauli component by 
    $m_P^{-1}\leq(2^k+1)^{n/k}$. 
    Applying this bound to the Pauli expansion of the estimator gives the single shot bound
    \begin{equation}
        \|\hat{\rho}\|_{\mathrm{op}}
        \leq
        (2^k+1)^{n/k}.
    \end{equation}

    It remains to control the variance parameter 
    \(\|\mathbb{E}\hat{\rho}^2\|_{\mathrm{op}}\).
    Using the group structure of the \(U_1\)-layer Clifford ensemble 
    \(\mathrm{Cl}(k)^{\otimes n/k}\), 
    the outer \(U_1\) gates can be pulled outside the inverse channel:
    \[
        \hat{\rho}
        =
        U_1^{\dagger}
        \mathcal{M}^{-1}
        \!\left(
        U_2^{\dagger}\ket{b}\!\bra{b}U_2
        \right)
        U_1 .
    \]
    Therefore, $\hat{\rho}^2= U_1^{\dagger}
        \mathcal{M}^{-1}(U_2^{\dagger}\ket{b}\!\bra{b}U_2)^2
        U_1$.
    A direct second-moment calculation then gives
    \begin{equation}
        \left\|
        \mathbb{E}_{U_1,U_2,b}\hat{\rho}^2
        \right\|_{\mathrm{op}}
        \leq
        \left(1+2^{-k}\right)^{2n/k}
        \sum_{P,Q}
        \tau_2(P,Q) f_1(P,Q),
    \end{equation}
    where $f_1(P,Q)=(2^k-1)^{-w_{k,1}(PQ)}$.

    The remaining Pauli sum has a transfer-matrix structure. 
    To see this, divide each $k$-qubit block into two adjacent sub-blocks of size $k/2$.
    The factor \(\tau_2(P,Q)\) couples sub-blocks according to the \(U_2\) layer, while \(f_1(P,Q)\) couples them according to the \(U_1\) layer. 
    Grouping Pauli pairs \((P,Q)\) according to whether they are identical on each sub-block reduces the sum to
    \[
        \sum_{P,Q}\tau_2(P,Q)f_1(P,Q)
        =
        \operatorname{Tr}((FG)^{n/k}),
    \]
    where
    \begin{equation}
        F=
        \begin{pmatrix}
        1 & \frac{1}{2^k-1} \\
        \frac{1}{2^k-1} & \frac{1}{2^k-1}
        \end{pmatrix},
        \quad
        G=
        \begin{pmatrix}
        \frac{2^{2k}}{2^k+1} & \frac{2^k(2^k-1)}{2^k+1} \\
        \frac{2^k(2^k-1)}{2^k+1} & \frac{2^k(2^k-1)^2}{2^k+1}
        \end{pmatrix}
    \end{equation}
    The eigenvalues \(\lambda_{\pm}\) of \(FG\) satisfy, for \(k\geq2\), $\lambda_{\pm}
        \leq
        2^k+\sqrt{5}\,2^{k/2}$.
    Consequently, under the condition \(k2^{k/2}\ge \Omega (n)\),
    \begin{align}
        \|\mathbb{E}\hat{\rho}^2\|_{\mathrm{op}}
        &\leq 
        \left(1+2^{-k}\right)^{2n/k} (\lambda_{+}^{n/k}+\lambda_{-}^{n/k})\notag\\
        &\leq
        \mathcal{O}(2^n).
    \end{align}

    Applying the matrix Bernstein inequality to
\(\hat{\rho}_T=\frac{1}{T}\sum_i\hat{\rho}_i\), and then using the rank-\(r\) projection argument [Appendix~\ref{appx:thm1}], gives $T
        =
        \mathcal{O}
        \!\left(
        {dr^2\log(d/\delta)/\epsilon^2}
        \right)$.
\end{proof}

For a rank-$r$ state, the known lower bound is $\Omega(dr^2/\epsilon^2)$ in single-copy measurements \cite{Lowe2025lower, scharnhorst2025optimal}.
Therefore, our upper bound yields near-optimal sample complexity for QST of rank-$r$ states. 
Theorem~\ref{thm1:rankr} establishes that an overlap of $\log(n)$ qubits between consecutive blocks is sufficient, which corresponds to choosing $k=2\log(n)$. 

For an explicit density-matrix or Pauli transfer matrix (PTM) representation, the classical post-processing time is $\widetilde{O}(d^2r^2/\epsilon^2)$ with memory $O(d^2)$, using the naturally available measurement data format [Appendix~\ref{appx:run_time}].
Here, $\widetilde{O}$ suppresses polylogarithmic factors in $d$, $r$, and $1/\epsilon$.

The proof above uses periodic boundary conditions (PBCs) for simplicity. 
The same transfer-matrix argument extends to open boundary conditions (OBCs), changing the sample complexity by at most a constant factor. 
More generally, if the circuit decomposes into $\ell$ disconnected regions, the sample complexity acquires at most an additional factor of $2^\ell$. 
Hence, for $\ell=\mathcal O(1)$, the rank-$r$ protocol retains the same near-optimal scaling as in the PBC case, without requiring the measurement ensemble to form an $n$-qubit approximate unitary design
[see Appendix~\ref{appx:obc}].

In the worst case where the quantum state has full rank, we show that sample-optimal QST can be achieved using structurally simpler measurement circuits than those used in Theorem~\ref{thm1:rankr}.
\begin{theorem} [\textit{Full rank}]\label{thm2:fullrank}
    If the quantum state $\rho$ has rank $d$, then using the measurement circuit of Fig.~\ref{fig2}(a) with depth ${\cal O}(\log n)$, 
    one can perform $(\epsilon, \delta)$-QST provided the number of samples satisfies 
    $T = {\cal O}(d^3\log(1/\delta)/\epsilon^2)$.
\end{theorem}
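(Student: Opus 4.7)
The route I would follow is the McDiarmid approach hinted at by the authors. Fix the single-layer ensemble $\mathcal{E}=\mathrm{Cl}(k)^{\otimes n/k}$ of Fig.~\ref{fig2}(a), for which $m_P=(2^k+1)^{-w_k(P)}$, and write the unbiased estimator in the Pauli basis as $\hat{\rho}_T = \frac{1}{d}\sum_P \bar c_P P$ where $\bar c_P = T^{-1}\sum_i m_P^{-1}(U_i,P,b_i)$ and $c_P=\operatorname{Tr}(\rho P)$. Because $\hat{\rho}_T-\rho$ has rank at most $d$, the rank-to-Frobenius inequality yields $\|\hat{\rho}_T-\rho\|_{\text{tr}}^2 \leq d\|\hat{\rho}_T-\rho\|_F^2 = \sum_P(\bar c_P-c_P)^2$; working with this scalar quantity rather than an operator-norm Bernstein bound is precisely what avoids the $\ln d$ loss incurred in Theorem~\ref{thm1:rankr}.

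To bound the mean, I would use $(U,P,b)^2=\mathbf{1}\{UPU^\dagger\in\pm\mathcal{Z}\}$ together with the Born rule summed over $b$ to obtain $\mathbb{E}[\hat c_P^2]=m_P^{-1}$, so that $\mathbb{E}\|\hat{\rho}_T-\rho\|_{\text{tr}}^2 \leq T^{-1}\sum_P m_P^{-1}$. The block factorization of $\mathcal{E}$ collapses this to $\sum_P m_P^{-1} = (1+(4^k-1)(2^k+1))^{n/k} = (8^k+4^k-2^k)^{n/k} \leq d^3 \exp(n/(k\,2^k))$, and the choice $k=\Theta(\log n)$ forces the exponential correction to be $\mathcal{O}(1)$. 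Jensen's inequality then delivers $\mathbb{E}\|\hat{\rho}_T-\rho\|_{\text{tr}}\leq\epsilon/2$ as soon as $T=\mathcal{O}(d^3/\epsilon^2)$.

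For concentration I would apply McDiarmid's inequality to $(U_1,b_1,\ldots,U_T,b_T)\mapsto\|\hat{\rho}_T-\rho\|_{\text{tr}}$: replacing one sample changes this value by at most $(2/T)\max\|\hat{\rho}\|_{\text{tr}}$. Using $\|\hat{\rho}\|_{\text{tr}} \leq \sqrt{d}\,\|\hat{\rho}\|_F$ with $\|\hat{\rho}\|_F^2 = \frac{1}{d}\sum_{P:UPU^\dagger\in\pm\mathcal{Z}}m_P^{-2}$, the same per-block computation gives $(1+(2^k-1)(2^k+1)^2)^{n/k} = (8^k+4^k-2^k)^{n/k} = \mathcal{O}(d^3)$, hence $\|\hat{\rho}\|_{\text{tr}} = \mathcal{O}(d^{3/2})$ and a bounded-difference constant of $\mathcal{O}(d^{3/2}/T)$. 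McDiarmid then yields $\Pr(\|\hat{\rho}_T-\rho\|_{\text{tr}}\geq\epsilon) \leq 2\exp(-\Omega(\epsilon^2 T/d^3))$, which drops below $\delta$ once $T=\mathcal{O}(d^3\ln(1/\delta)/\epsilon^2)$.

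The main obstacle is the tight block-by-block evaluation of the two Pauli sums: both factorize to the same quantity $(8^k+4^k-2^k)^{n/k}$, a coincidence of the identity $4^k-1=(2^k-1)(2^k+1)$, and depth $k=\Theta(\log n)$ is precisely the threshold beyond which the correction $\exp(n/(k\,2^k))$ stays $\mathcal{O}(1)$; any shallower and both the estimator variance and the bounded-difference constant blow up by a super-polynomial factor, while deeper gains nothing. Once this sharp bookkeeping is in place, the rest---Jensen on the variance bound and a single application of McDiarmid---is essentially mechanical.
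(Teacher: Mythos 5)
Your proposal is correct and follows essentially the same route as the paper's proof: the same block-Clifford ensemble, the same block-factorized Pauli sums that both collapse to $(8^k+4^k-2^k)^{n/k}$, Jensen's inequality to control the expected error, McDiarmid with a bounded-difference constant set by the worst-case norm of a single-shot estimator, and the $\sqrt{d}$ Frobenius-to-trace conversion at threshold $k2^k\gtrsim n$. The only cosmetic difference is that you draw unitaries i.i.d.\ while the main-text proof deterministically enumerates all $(2^k+1)^{n/k}$ bases with $M$ shots each and applies McDiarmid to the Frobenius norm directly; the paper itself uses your i.i.d.\ variant in Appendix~\ref{twolayer}.
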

\begin{proof}[Proof sketch]
    The unitary ensemble used here is constructed by selecting mutually unbiased bases within each block [Fig.~\ref{fig2}(a)], 
    so it contains $N_U=(2^k+1)^{n/k}$ unitary settings. 
    For each setting, we perform $N_S$ repetitions, so that the total number of samples is $T=N_S N_U=N_S(2^k+1)^{n/k}$.
    
    Let the experimental outcomes obtained from the $T$ measurements be denoted by 
    $\mathbf X = \{X_1, X_2, \ldots, X_T\}$. 
    For notational convenience, we denote by 
    $\{Z_1, Z_2, \ldots, Z_{N_SN_P}\}$ the subset of measurement outcomes that provide 
    actionable information for a specific Pauli operator $P$, 
    where $Z_i\in\{-1,1\}$ and $N_P = (2^k+1)^{n/k-w_k(P)}$ is the number of actionable bases. 
    Then, an unbiased estimator can be constructed as follows:
    \begin{equation}\label{main:eq:estimator_QST2}
        \hat{\rho}=\sum_P \frac{\mu_P}{2^n N_S N_P}P,
    \end{equation}
    where $\mu_P=Z_1+Z_2+\dots+Z_{N_S N_P}$ with $\mathbb{E}[Z]=\text{Tr}(\rho P)$, $\mathbb{E}[Z^2]=1$. We bound the Frobenius error $f(\rho)=f(\rho;\mathbf X)=||\rho-\hat{\rho}||_{\mathrm F}$. By using block unitary ensemble [Fig.~\ref{fig2}(a)], we can derive the following two key results. 
    \begin{align}
        &1.\,\mathbb{E}[f(\rho)]\leq \frac{1}{\sqrt{N_S}}\left(\frac{4^k+2^k-1}{2^k+1}\right)^{n/2k} \label{eq:thm_key_1}\\
        &2.\,f(\rho;\mathbf X)-f(\rho;\mathbf X^{(i)})\leq \frac{2(4^k+2^k-1)^{n/2k}}{N_S(2^k+1)^{n/k}} \label{eq:thm_key_2},
    \end{align}
    where $\mathbf{X}^{(i)}$ denotes the case in which only $X_i$ is different, while all others remain the same. 
    Using Eq.~\eqref{eq:thm_key_1} and Eq.~\eqref{eq:thm_key_2}, and applying McDiarmid’s inequality, the total number of samples required for QST under the assumption $k2^k= \Omega(n)$ is $T=N_S(2^k+1)^{n/k} = {\cal O}(d^3 \log(1/\delta)/\epsilon^2)$.
\end{proof}
In the full-rank case, even in the adaptive setting, the lower bound is $\Omega(d^3/\epsilon^2)$~\cite{chen2023does}, and thus our upper bound is optimal in this regime. Unlike the rank-$r$ case, for full-rank states the sample complexity saturates at the optimal scaling when the block size $k$ = $\log n$, i.e., half of that in the rank-$r$ case. For an explicit density-matrix or PTM representation, the classical post-processing time is $\widetilde{\mathcal O}(d^2)$ with memory $\mathcal O(d^2)$, using the naturally available measurement data format [Appendix~\ref{appx:run_time}]. This runtime is optimal up to polylogarithmic factors, as explicitly outputting a density matrix or PTM representation already requires $\Omega(d^2)$ time.

For circuits with depth $o(\log n)$, optimality in sample complexity is not covered by our proof; 
however, by substituting $k=1$ into our upper bound, we recover the near-optimal single-copy measurement sample complexity of ${\cal O}(10^n / \epsilon^2)$ \cite{acharya2025pauli1}, 
which is slightly larger than the known lower bound by a factor of $\sqrt{n}$ \cite{acharya2025pauli2}. 
Therefore, we expect that even in the regime depth $o(\log n)$, the circuit of Fig.~\ref{fig2}(a) still provides sufficiently good scaling for QST.

\section{NUMERICAL SIMULATIONS}
\begin{figure}[t]
    \centering
    \includegraphics[width=1\linewidth, trim=1cm 7.2cm 3cm 1cm, clip]{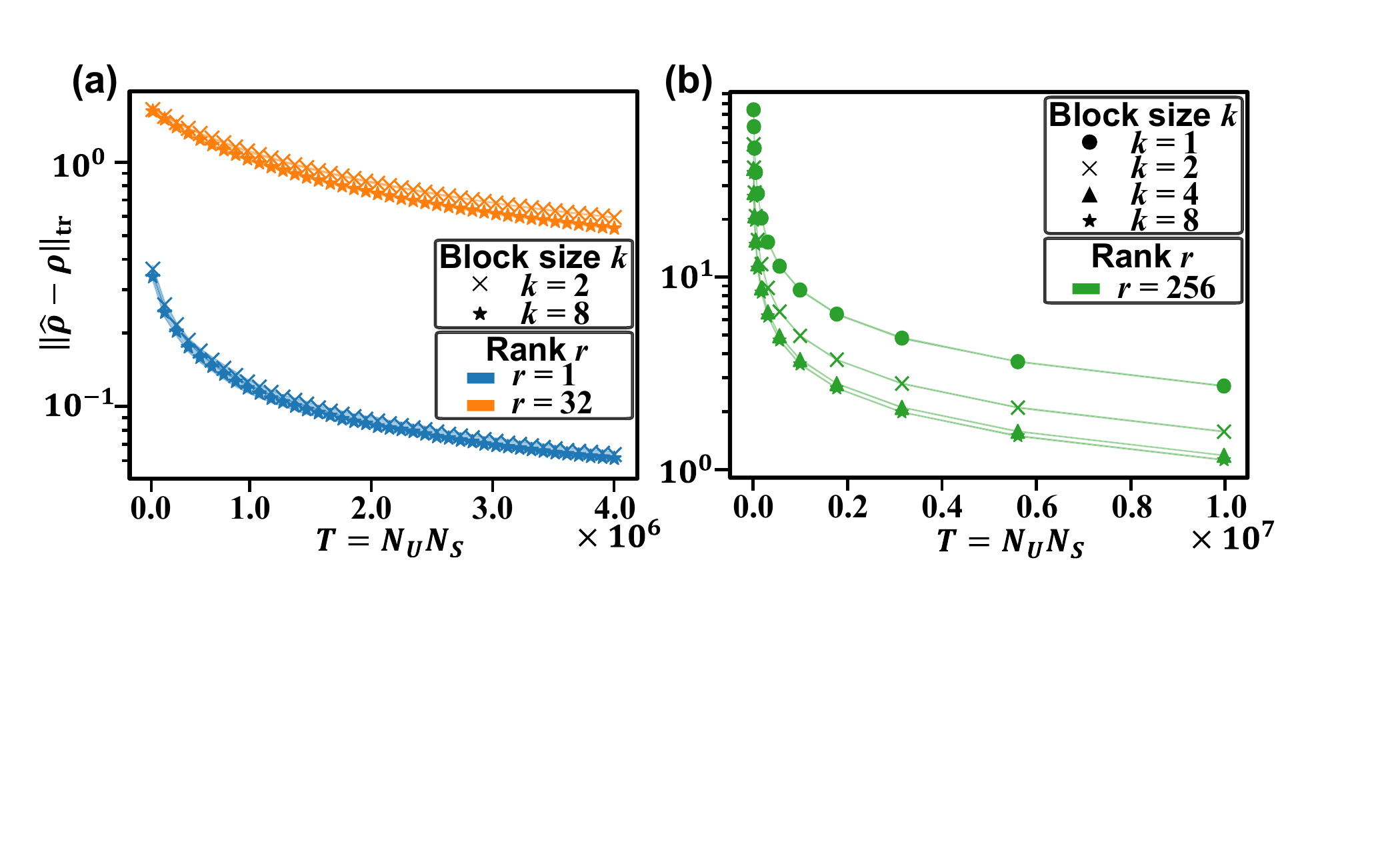}
    \caption{\label{fig:numerics}
    Numerical simulations of shallow-depth quantum state tomography (QST) on random 8-qubit states. 
    (a) Rank-$r$ QST using the two-layer block Clifford ensemble with $k=2$, compared with the global Clifford ensemble ($k=8$), for $r=1$ and $r=32$. 
    (b) Full-rank QST ($r=d=256$) using the one-layer block Clifford ensemble with $k=1,2,4,8$. 
    The trace-norm error $\|\rho-\hat{\rho}\|_{\mathrm{tr}}$ is plotted against the total number of samples $T=N_U N_S$. 
    Markers denote the mean over 10 trials, while shaded regions indicate one standard deviation.}
\end{figure}
We numerically benchmark the proposed shallow-depth QST protocols on random 8-qubit quantum states ($n = 8$). We consider the two measurement ensembles introduced in Fig.~\ref{fig2} and evaluate the reconstruction error $\|\rho-\hat{\rho}\|_{\mathrm{tr}}$ as a function of the total number of samples $T=N_U N_S$.

In Fig.~\ref{fig:numerics}(a), we perform QST on random rank-$r$ quantum states using the two-layer block Clifford ensemble with block size $k=2$. As a reference, we also include results obtained from the global Clifford ensemble $\mathrm{Cl}(n)$. We perform the simulations for $r=1$ and $r=32$, sampling $N_U$ Clifford circuits and performing $N_S$ shots for each circuit, giving a total of $T=N_U N_S$ measurement outcomes. The estimator is obtained from the empirical average of Eq.~\eqref{eq:estimator_QST1}, followed by projection onto the set of rank-$r$ positive semidefinite density matrices~\cite{guctua2020fast}.

In Fig.~\ref{fig:numerics}(b), we perform QST on random full-rank quantum states using the unbiased estimator in Eq.~\eqref{main:eq:estimator_QST2}. Here, we use the one-layer block Clifford ensemble with block sizes $k\in\{1,2,4,8\}$. For each $k$, we take all $N_U=(2^k+1)^{n/k}$ measurement bases obtained from mutually unbiased bases within each block, and choose $N_S$ so that the total number of samples $T=N_U N_S$ is comparable across different values of $k$. In both panels, each marker represents the average over $10$ independent trials, and shaded regions indicate the standard deviation.

\section{Discussion}
We have studied single-copy QST using measurement circuits of depth \(\mathcal{O}(\log n)\). 
For rank-\(r\) states, the sample complexity of our protocol matches the single-copy lower bound up to a logarithmic factor. 
For full-rank states, it achieves the optimal sample-complexity scaling. 
For explicit density-matrix or PTM output, the classical runtime improves the dependence on \(r\) and \(\epsilon\) over the previously known fast tomography scheme~\cite{brandao2020fast} in the rank-\(r\) case, and is optimal up to polylogarithmic factors in the full-rank case. 
Although constructing an unbiased estimator for \(\rho\) is straightforward, proving concentration of its empirical average is technically nontrivial. 
We address this challenge by combining the Weingarten calculus with transfer matrix techniques, obtaining a near-optimal sample complexity for rank-\(r\) states. 
Moreover, for full-rank states, we show that optimal sample complexity can be achieved with structurally simpler circuits that do not form approximate unitary designs.

Our work suggests several directions for future research. First, for rank-$r$ states, it remains open whether the $\ln(d)$ factor in our upper bound can be eliminated. Second, while biased estimators suffice for certain quantities such as fidelity \cite{schuster2025random} and purity \cite{cho2025entanglement}, we showed that unbiased estimators are crucial for QST. Identifying other tasks where more refined estimators are required would be an interesting avenue for further study. Finally, in rank-$r$ case, we relied on unitary ensembles capable of forming approximate unitary designs; however, our statistical analysis did not exploit the approximate-design property. It therefore remains to be understood whether QST with small trace-norm error is possible solely under the assumption of approximate unitary designs, regardless of the circuit structure, or whether the circuit architecture itself plays an essential role.


\begin{acknowledgments}
This work was supported by a National Research Foundation of Korea (NRF) grant funded by the Korean Government (Ministry of Science and ICT (MSIT)) (RS-2023-NR057243, RS-2023-00283291, RS-2024-00413957, SRC Center for Quantum Coherence in Condensed Matter RS-2023-00207732, RS-2023-NR077112 and Quantum Technology R\&D Leading Program (Quantum Computing) RS-2024-00442994) and a core center program grant funded by the Ministry of Education (No. 2021R1A6C101B418). 
\end{acknowledgments}

\bibliographystyle{apsrev4-2}  
\bibliography{main}

@PREAMBLE{
 "\providecommand{\noopsort}[1]{}" 
 # "\providecommand{\singleletter}[1]{#1}%" 
}

@book{Nielsen_Chuang_2010, place={Cambridge}, title={Quantum Computation and Quantum Information: 10th Anniversary Edition}, publisher={Cambridge University Press}, author={Nielsen, Michael A. and Chuang, Isaac L.}, year={2010}}

@misc{qiskit2024,
  title={Quantum computing with {Q}iskit},
  author={Javadi-Abhari, Ali and Treinish, Matthew and Krsulich, Kevin and Wood, Christopher J. and Lishman, Jake and Gacon, Julien and Martiel, Simon and Nation, Paul D. and Bishop, Lev S. and Cross, Andrew W. and Johnson, Blake R. and Gambetta, Jay M.},
  year={2024},
  eprint={2405.08810},
  archivePrefix={arXiv},
  primaryClass={quant-ph}
}

@misc{qiskit_counts_docs,
  title = {{Qiskit} API Reference: Counts},
  author = {{IBM Quantum}},
  year = {2026},
  note = {Accessed 2026-05-11}
}

@article{brandao2020fast,
  title={Fast and robust quantum state tomography from few basis measurements},
  author={Brand{\~a}o, Fernando GSL and Kueng, Richard and Fran{\c{c}}a, Daniel Stilck},
  journal={arXiv preprint arXiv:2009.08216},
  year={2020}
}

@article{fino1976unified,
  title = {Unified matrix treatment of the fast Walsh--Hadamard transform},
  author = {Fino, Bernard J. and Algazi, V. Ralph},
  journal = {IEEE Transactions on Computers},
  volume = {C-25},
  number = {11},
  pages = {1142--1146},
  year = {1976}
}

@article{qst_super,
  title={Preparation and measurement of three-qubit entanglement in a superconducting circuit},
  author={DiCarlo, Leonardo and Reed, Matthew D and Sun, Luyan and Johnson, Blake R and Chow, Jerry M and Gambetta, Jay M and Frunzio, Luigi and Girvin, Steven M and Devoret, Michel H and Schoelkopf, Robert J},
  journal={Nature},
  volume={467},
  number={7315},
  pages={574--578},
  year={2010},
  publisher={Nature Publishing Group UK London}
}

@article{qst_ion,
  title = {Bell States of Atoms with Ultralong Lifetimes and Their Tomographic State Analysis},
  author = {Roos, C. F. and Lancaster, G. P. T. and Riebe, M. and H\"affner, H. and H\"ansel, W. and Gulde, S. and Becher, C. and Eschner, J. and Schmidt-Kaler, F. and Blatt, R.},
  journal = {Phys. Rev. Lett.},
  volume = {92},
  issue = {22},
  pages = {220402},
  numpages = {4},
  year = {2004},
  month = {Jun},
  publisher = {American Physical Society},
}

@article{qst_nv,
  title = {Complete Quantum-State Tomography with a Local Random Field},
  author = {Yang, Pengcheng and Yu, Min and Betzholz, Ralf and Arenz, Christian and Cai, Jianming},
  journal = {Phys. Rev. Lett.},
  volume = {124},
  issue = {1},
  pages = {010405},
  numpages = {6},
  year = {2020},
  month = {Jan},
  publisher = {American Physical Society},
}

@article{huang2020predicting,
  title={Predicting many properties of a quantum system from very few measurements},
  author={Huang, Hsin-Yuan and Kueng, Richard and Preskill, John},
  journal={Nature Physics},
  volume={16},
  number={10},
  pages={1050--1057},
  year={2020},
  publisher={Nature Publishing Group UK London}
}

@inproceedings{chen2024optimal,
  title={An optimal tradeoff between entanglement and copy complexity for state tomography},
  author={Chen, Sitan and Li, Jerry and Liu, Allen},
  booktitle={Proceedings of the 56th Annual ACM Symposium on Theory of Computing},
  pages={1331--1342},
  year={2024}
}

@inproceedings{acharya2025pauli1,
  title={Pauli measurements are not optimal for single-copy tomography},
  author={Acharya, Jayadev and Dharmavarapu, Abhilash and Liu, Yuhan and Yu, Nengkun},
  booktitle={Proceedings of the 57th Annual ACM Symposium on Theory of Computing},
  pages={718--729},
  year={2025}
}

@article{grewal2026pauli,
  title={Nearly Time-Optimal Pure State Tomography with Pauli Measurements},
  author={Grewal, Sabee and Gupta, Meghal and He, William and Sen, Aniruddha and Singhal, Mihir},
  journal={arXiv preprint arXiv:2601.04444},
  year={2026}
}

@article{acharya2025pauli2,
  title={Pauli Measurements Are Near-Optimal for Single-Qubit Tomography},
  author={Acharya, Jayadev and Dharmavarapu, Abhilash and Liu, Yuhan and Yu, Nengkun},
  journal={arXiv preprint arXiv:2507.22001},
  year={2025}
}

@inproceedings{haah2016sample,
  title={Sample-optimal tomography of quantum states},
  author={Haah, Jeongwan and Harrow, Aram W and Ji, Zhengfeng and Wu, Xiaodi and Yu, Nengkun},
  booktitle={Proceedings of the forty-eighth annual ACM symposium on Theory of Computing},
  pages={913--925},
  year={2016}
}

@article{cho2025entanglement,
  title={Entanglement-enhanced randomized measurement in noisy quantum devices},
  author={Cho, Gyungmin and Kim, Dohun},
  journal={arXiv preprint arXiv:2504.15698},
  year={2025}
}

@article{tang2025conjugate,
  title={Conjugate queries can help},
  author={Tang, Ewin and Wright, John and Zhandry, Mark},
  journal={arXiv preprint arXiv:2510.07622},
  year={2025}
}

@article{pelecanos2025mixed,
  title={Mixed state tomography reduces to pure state tomography},
  author={Pelecanos, Angelos and Spilecki, Jack and Tang, Ewin and Wright, John},
  journal={arXiv preprint arXiv:2511.15806},
  year={2025}
}

@article{elben2023randomized,
  title={The randomized measurement toolbox},
  author={Elben, Andreas and Flammia, Steven T and Huang, Hsin-Yuan and Kueng, Richard and Preskill, John and Vermersch, Beno{\^\i}t and Zoller, Peter},
  journal={Nature Reviews Physics},
  volume={5},
  number={1},
  pages={9--24},
  year={2023},
  publisher={Nature Publishing Group UK London}
}

@article{hu2025demonstration,
  title={Demonstration of robust and efficient quantum property learning with shallow shadows},
  author={Hu, Hong-Ye and Gu, Andi and Majumder, Swarnadeep and Ren, Hang and Zhang, Yipei and Wang, Derek S and You, Yi-Zhuang and Minev, Zlatko and Yelin, Susanne F and Seif, Alireza},
  journal={Nature Communications},
  volume={16},
  number={1},
  pages={2943},
  year={2025},
  publisher={Nature Publishing Group UK London}
}

@article{scharnhorst2025optimal,
  title={Optimal lower bounds for quantum state tomography},
  author={Scharnhorst, Thilo and Spilecki, Jack and Wright, John},
  journal={arXiv preprint arXiv:2510.07699},
  year={2025}
}

@article{bertoni2024shallow,
  title={Shallow shadows: Expectation estimation using low-depth random Clifford circuits},
  author={Bertoni, Christian and Haferkamp, Jonas and Hinsche, Marcel and Ioannou, Marios and Eisert, Jens and Pashayan, Hakop},
  journal={Physical Review Letters},
  volume={133},
  number={2},
  pages={020602},
  year={2024},
  publisher={APS}
}

@inproceedings{o2016efficient,
  title={Efficient quantum tomography},
  author={O'Donnell, Ryan and Wright, John},
  booktitle={Proceedings of the forty-eighth annual ACM symposium on Theory of Computing},
  pages={899--912},
  year={2016}
}

@article{Lowe2025lower,
  author    = {Lowe, Angus and Nayak, Ashwin},
  title     = {Lower Bounds for Learning Quantum States with Single-Copy Measurements},
  journal   = {ACM Transactions on Computation Theory},
  year      = {2025},
  issue_date= {March 2025},
  publisher = {Association for Computing Machinery},
  address   = {New York, NY, USA},
  volume    = {17},
  number    = {1},
  issn      = {1942-3454},
  month     = mar,
  articleno = {7},
  numpages  = {42},
  keywords  = {Quantum state tomography, sample complexity, lower bounds, single-copy measurements, adaptive measurements, circuit complexity, classical shadows, shadow tomography}
}

@article{schuster2025random,
  title={Random unitaries in extremely low depth},
  author={Schuster, Thomas and Haferkamp, Jonas and Huang, Hsin-Yuan},
  journal={Science},
  volume={389},
  number={6755},
  pages={92--96},
  year={2025},
  publisher={American Association for the Advancement of Science}
}

@article{halko2011finding,
  title={Finding structure with randomness: Probabilistic algorithms for constructing approximate matrix decompositions},
  author={Halko, Nathan and Martinsson, Per-Gunnar and Tropp, Joel A},
  journal={SIAM review},
  volume={53},
  number={2},
  pages={217--288},
  year={2011},
  publisher={SIAM}
}

@article{musco2015randomized,
  title={Randomized block krylov methods for stronger and faster approximate singular value decomposition},
  author={Musco, Cameron and Musco, Christopher},
  journal={Advances in neural information processing systems},
  volume={28},
  year={2015}
}

@inproceedings{chen2023does,
  title={When does adaptivity help for quantum state learning?},
  author={Chen, Sitan and Huang, Brice and Li, Jerry and Liu, Allen and Sellke, Mark},
  booktitle={2023 IEEE 64th Annual Symposium on Foundations of Computer Science (FOCS)},
  pages={391--404},
  year={2023},
  organization={IEEE}
}

@article{guctua2020fast,
  title={Fast state tomography with optimal error bounds},
  author={Gu{\c{t}}{\u{a}}, Madalin and Kahn, Jonas and Kueng, Richard and Tropp, Joel A},
  journal={Journal of Physics A: Mathematical and Theoretical},
  volume={53},
  number={20},
  pages={204001},
  year={2020},
  publisher={IOP Publishing}
}

@article{cioli2025approximate,
  title={Approximate inverse measurement channel for shallow shadows},
  author={Cioli, Riccardo and Ercolessi, Elisa and Ippoliti, Matteo and Turkeshi, Xhek and Piroli, Lorenzo},
  journal={Quantum},
  volume={9},
  pages={1698},
  year={2025},
  publisher={Verein zur F{\"o}rderung des Open Access Publizierens in den Quantenwissenschaften}
}

@article{hangleiter2024bell,
  title={Bell sampling from quantum circuits},
  author={Hangleiter, Dominik and Gullans, Michael J},
  journal={Physical Review Letters},
  volume={133},
  number={2},
  pages={020601},
  year={2024},
  publisher={APS}
}

@inproceedings{CGYZ25,
author = {Chen, Sitan and Gong, Weiyuan and Ye, Qi and Zhang, Zhihan},
title = {Stabilizer Bootstrapping: A Recipe for Efficient Agnostic Tomography and Magic Estimation},
year = {2025},
isbn = {9798400715105},
publisher = {Association for Computing Machinery},
address = {New York, NY, USA},
abstract = {We study the task of agnostic tomography: given copies of an unknown n-qubit state ρ which has fidelity τ with some state in a given class C, find a state which has fidelity ≥ τ − є with ρ. We give a new framework, stabilizer bootstrapping, for designing computationally efficient protocols for this task, and use this to get new agnostic tomography protocols for the following classes: Stabilizer states: We give a protocol that runs in time poly(n,1/є)· (1/τ)O(log(1/τ)), answering an open question posed by Grewal, Iyer, Kretschmer, Liang and Anshu and Arunachalam. Previous protocols ran in time exp(Θ(n)) or required τ>cos2(π/8). States with stabilizer dimension n − t: We give a protocol that runs in time n3·(2t/τ)O(log(1/є)), extending recent work on learning quantum states prepared by circuits with few non-Clifford gates, which only applied in the realizable setting where τ = 1. Discrete product states: If C = K⊗ n for some µ-separated discrete set K of single-qubit states, we give a protocol that runs in time (n/µ)O((1 + log(1/τ))/µ)/є2. This strictly generalizes a prior guarantee which applied to stabilizer product states. For stabilizer product states, we give a further improved protocol that runs in time (n2/є2)· (1/τ)O(log(1/τ)). As a corollary, we give the first protocol for estimating stabilizer fidelity, a standard measure of magic for quantum states, to error є in n3 quasipoly(1/є) time.},
booktitle = {Proceedings of the 57th Annual ACM Symposium on Theory of Computing},
pages = {429–438},
numpages = {10},
keywords = {Quantum learning, agnostic tomography, discrete product state, magic estimation, stabilizer dimension, stabilizer product state, stabilizer states},
location = {Prague, Czechia},
series = {STOC '25}
}

@inproceedings{grewal2024improved,
  title={Improved stabilizer estimation via bell difference sampling},
  author={Grewal, Sabee and Iyer, Vishnu and Kretschmer, William and Liang, Daniel},
  booktitle={Proceedings of the 56th Annual ACM Symposium on Theory of Computing},
  pages={1352--1363},
  year={2024}
}

@inproceedings{landau2025learning,
  title={Learning quantum states prepared by shallow circuits in polynomial time},
  author={Landau, Zeph and Liu, Yunchao},
  booktitle={Proceedings of the 57th Annual ACM Symposium on Theory of Computing},
  pages={1828--1838},
  year={2025}
}

@article{kueng2017low,
  title={Low rank matrix recovery from rank one measurements},
  author={Kueng, Richard and Rauhut, Holger and Terstiege, Ulrich},
  journal={Applied and Computational Harmonic Analysis},
  volume={42},
  number={1},
  pages={88--116},
  year={2017},
  publisher={Elsevier}
}

@article{leone2024learning,
  title={Learning t-doped stabilizer states},
  author={Leone, Lorenzo and Oliviero, Salvatore FE and Hamma, Alioscia},
  journal={Quantum},
  volume={8},
  pages={1361},
  year={2024},
  publisher={Verein zur F{\"o}rderung des Open Access Publizierens in den Quantenwissenschaften}
}

@article{transfer,
  title = {Crystal Statistics. I. A Two-Dimensional Model with an Order-Disorder Transition},
  author = {Onsager, Lars},
  journal = {Phys. Rev.},
  volume = {65},
  issue = {3-4},
  pages = {117--149},
  numpages = {0},
  year = {1944},
  month = {Feb},
  publisher = {American Physical Society},
}

@article{collins2022weingarten,
  title={The weingarten calculus},
  author={Collins, Benoit and Matsumoto, Sho and Novak, Jonathan},
  journal={Notices of the American Mathematical Society},
  volume={69},
  number={05},
  pages={1},
  year={2022},
  publisher={American Mathematical Society (AMS)}
}

@article{bravyi2021clifford,
  title={Clifford circuit optimization with templates and symbolic Pauli gates},
  author={Bravyi, Sergey and Shaydulin, Ruslan and Hu, Shaohan and Maslov, Dmitri},
  journal={Quantum},
  volume={5},
  pages={580},
  year={2021},
  publisher={Verein zur F{\"o}rderung des Open Access Publizierens in den Quantenwissenschaften}
}

\onecolumngrid


\appendix

\section{RELATED WORK}\label{appx:related_works}

\paragraph{Fully collective (multi-copy) measurements.}
For QST of a rank-\(r\) quantum state, the smallest possible sample complexity is obtained when one is allowed to perform collective measurements on many copies of the state. 
In this setting, the lower bound on the sample complexity is saturated, and the optimal scaling for trace-distance error \(\epsilon\) is
\[
    T=\Theta\!\left(\frac{dr}{\epsilon^2}\right).
\]
The lower bound was shown in Ref.~\cite{haah2016sample}, and a matching upper bound without the logarithmic factor follows from the tomography protocol of Ref.~\cite{o2016efficient}. 
These results settle the sample complexity of QST when no restriction is imposed on how many copies can be measured jointly. 
However, the corresponding measurements are highly collective and are therefore difficult to implement on near-term quantum devices.

\paragraph{\(t\)-copy measurements.}
A natural intermediate setting is to restrict the number of copies that can be measured jointly. 
In this model, each measurement is allowed to act on at most \(t\) copies of the unknown state. 
For full-rank states, it was shown that for \(t\le d^2\), the optimal sample complexity is
\[
    T=\widetilde{\Theta}\!\left(\frac{d^3}{\sqrt{t}\epsilon^2}\right)
\]
for trace-distance error $\epsilon$~\cite{chen2024optimal}. 
This result interpolates between the single-copy and fully collective regimes, and quantifies the advantage obtained from entangling measurements across different copies. 
In particular, recovering the fully collective full-rank scaling \(O(d^2/\epsilon^2)\) requires \(t\) to be of order \(d^2\). 
Thus, for an \(n\)-qubit system, a fully optimal implementation in this model requires measurements that act jointly on an exponentially large number of copies.

\paragraph{Single-copy measurements.}
In the single-copy setting, each measurement acts on only one copy of the quantum state. 
For rank-\(r\) states, a lower bound of
\[
    T=\Omega\!\left(\frac{dr^2}{\epsilon^2}\right)
\]
is known for trace-distance error $\epsilon$~\cite{haah2016sample,Lowe2025lower,scharnhorst2025optimal}. 
This scaling is achieved, up to model-dependent factors, by single-copy protocols based on structured measurements such as appropriate POVMs or random rank-one measurements~\cite{haah2016sample,kueng2017low,guctua2020fast}. 
A closely related near-optimal strategy is obtained by using projective measurements drawn from a state \(2\)-design, which incurs an additional logarithmic factor~\cite{Lowe2025lower}. 
These measurements are still global measurements on the \(n\)-qubit system, and their direct implementation generally requires circuits of depth \(\mathcal{O}(n)\).

\paragraph{Single-qubit measurements for full-rank mixed states.}
A more restricted measurement model is single-qubit measurement. 
Pauli measurements are the most common example of this setting. 
For general mixed states on \(n\) qubits, recent work improved the sample complexity of Pauli tomography to
\[
    T=O\!\left(\frac{10^n}{\epsilon^2}\right)
\]
for trace-distance error $\epsilon$~\cite{acharya2025pauli1}. 
The same work also proved a lower bound of
\[
    T=\Omega\!\left(\frac{9.118^n}{\epsilon^2}\right)
\]
for Pauli measurements, showing a separation between Pauli measurements and more general single-copy measurements. 
Subsequently, it was shown that arbitrary single-qubit measurements also require nearly
\[
    T=\Omega\!\left(\frac{10^n}{\sqrt{n}\epsilon^2}\right)
\]
copies, up to a polynomial factor in \(n\)~\cite{acharya2025pauli2}. 
Therefore, single-qubit measurements are nearly optimal within their own measurement model. 
At the same time, this scaling remains separated from the \(O(8^n/\epsilon^2)\) scaling achievable by more general single-copy measurements.

\paragraph{Single-qubit measurements for pure states.}
For pure states, the sample complexity can be substantially smaller because of the rank-one structure. 
Recent work showed that pure-state tomography can be performed using only nonadaptive Pauli measurements with
\[
    T=\widetilde{O}\!\left(\frac{2^n}{\delta}\right)
\]
copies, while outputting a state with fidelity at least \(1-\delta\) with high probability~\cite{grewal2026pauli}. 
The same result also gives nearly optimal classical runtime. 
This shows that, under the promise that the unknown state is pure, single-qubit Pauli measurements are sufficient to achieve near-optimal copy complexity. 
Extending this approach to general rank-\(r\) mixed states is less direct. 
One possible route is to first convert copies of a mixed state into copies of a random purification and then apply a pure-state tomography algorithm~\cite{tang2025conjugate,pelecanos2025mixed}. 
However, the random purification step itself is a genuinely multi-copy procedure. 
Therefore, this reduction does not preserve a shallow single-copy measurement structure. 
For pure states, the \(r=1\) specialization of our result improves the sample complexity of Ref.~\cite{grewal2026pauli} by a \(\mathrm{poly}(n)\) factor, at the cost of using logarithmic-depth circuits rather than single-qubit Pauli measurements. 
Conversely, our reconstruction produces an explicit PTM or density-matrix representation, and hence incurs an additional factor of \(2^n\) in classical post-processing relative to Ref.~\cite{grewal2026pauli}. 

\section{SHADOW CHANNEL}\label{appx:shadow}
In this appendix, we will explain the important properties of the shadow channel used in the main text. The shadow channel is defined by
\begin{align}\label{eqn:shadow}
    {\cal M}(\rho)
    &=\mathbb{E}_{U\sim {\cal E},b}U^{\dagger}\ket{b}\!\bra{b}U\\
    &=\mathbb{E}_{U\sim \cal E}\sum_b U^{\dagger}\ket{b}\!\bra{b}U\cdot \mathrm{Tr}(U\rho U^{\dagger} \ket{b}\!\bra{b}),
\end{align}
where ${\cal E}$ is the unitary ensemble from which $U$ is sampled uniformly and $b\in\{0,1\}^n$ is the measurement outcome on $U\rho U^{\dagger}$. By using ~\eqref{eqn:shadow} and assuming the unitary ensemble ${\cal E}$ is informationally complete, we can construct an unbiased estimator of the quantum state $\rho$ as follows:
\begin{equation}\label{eqn:unbiased}
    \hat{\rho}={\cal M}^{-1}(U^{\dagger}\ket{b}\!\bra{b}U).
\end{equation}
With \eqref{eqn:unbiased}, an unbiased estimator of the arbitrary observable $O$ is as follows:
\begin{equation}
    \hat{o}=\text{Tr}(O\hat{\rho}).
\end{equation}
The shadow channel and its inverse depend on the choice of the unitary ensemble \(\mathcal E\). 
Thus, the ensemble should be chosen according to the observables or reconstruction task of interest. 
For example, if the goal is to estimate low-weight observables, then a local Clifford ensemble such as 
\(\mathcal E=\mathrm{Cl}(1)^{\otimes n}\) is often appropriate. 
On the other hand, if one is interested in observables with bounded trace norm, such as 
\(O=\ket{\psi}\!\bra{\psi}\), then the global Clifford ensemble \(\mathcal E=\mathrm{Cl}(n)\) is a natural choice.

In the main text, we assume that our unitary ensemble consists solely of Clifford gates. Under this assumption, the following two properties hold:
\begin{align}
    &1.\quad(U, P, b)=(U, P, b)\mathbf{1}\{UPU^{\dagger}\in\pm{\cal Z}\}\label{appx:prop1}\\ 
    &2.\quad\sum_{b}\prod_i (U,P_i,b)=2^n\delta_{\prod_i P_i,\pm I}\prod_i\mathbf{1}\{UP_iU^{\dagger}\in\pm{\cal Z}\},\label{appx:prop2}
\end{align}
where $(U,P,b)\coloneqq \text{Tr}(UPU^{\dagger}\ket{b}\!\bra{b})$, ${\cal Z}=\{I,Z\}^n$ and $\mathbf{1}\{\text{True}\}=1$, otherwise 0. Eq.~\eqref{appx:prop1} is immediate since $UPU^{\dagger}$ is also a Pauli operator. It yields 0 unless it is a Pauli $Z$-string, and hence the indicator function can be freely inserted. Eq.~\eqref{appx:prop2} is checked as follows:
\begin{align}
    \sum_b\prod_i (U,P_i,b)&=\sum_b\prod_i (U,P_i,b)\mathbf{1}\{UP_iU^{\dagger}\in\pm{\cal Z}\}\\
    &=\sum_b\text{Tr}\left(\left(\prod_i UP_iU^{\dagger}\right)\ket{b}\!\bra{b}\right)\prod_i\mathbf{1}\{UP_iU^{\dagger}\in\pm{\cal Z}\}\\
    &=\sum_b\text{Tr}\left(U\left(\prod_i P_i\right)U^{\dagger}\ket{b}\!\bra{b}\right)\prod_i\mathbf{1}\{UP_iU^{\dagger}\in\pm{\cal Z}\}\\
    &=\text{Tr}\left(U\left(\prod_i P_i\right)U^{\dagger}\right)\prod_i\mathbf{1}\{UP_iU^{\dagger}\in\pm{\cal Z}\}\\
    &=2^n\delta_{\prod_i P_i, I}\prod_i\mathbf{1}\{UP_iU^{\dagger}\in\pm{\cal Z}\},
\end{align}
where $\delta$ denotes equality up to a phase. Using the above properties, we can show the Pauli operator P is an eigenoperator of the shadow channel ${\cal M}$ as follows: 
\begin{align}
    {\cal M}(P)&=\mathbb{E}_{U}\sum_{b} U^{\dagger}\ket{b}\!\bra{b}U\cdot\text{Tr}(UPU^{\dagger}\ket{b}\!\bra{b})\\
    &=\mathbb{E}_{U}\sum_{b}\sum_Q \frac{Q}{2^n}\text{Tr}(UQU^{\dagger}\ket{b}\!\bra{b})\cdot\text{Tr}(UPU^{\dagger}\ket{b}\!\bra{b})\\
    &=\mathbb{E}_{U}\sum_{b}\sum_Q \frac{Q}{2^n}\text{Tr}(UQU^{\dagger}\ket{b}\!\bra{b})\cdot\text{Tr}(UPU^{\dagger}\ket{b}\!\bra{b})\mathbf{1}\{UPU^{\dagger}\in\pm{\cal Z}\}\mathbf{1}\{UQU^{\dagger}\in\pm{\cal Z}\}\\
    &=\mathbb{E}_{U}\sum_Q \mathbf{1}\{UPU^{\dagger}\in\pm{\cal Z}\}\mathbf{1}\{UQU^{\dagger}\in\pm{\cal Z}\}\delta_{P,Q}Q\\
    &=\mathbb{E}_{U} \mathbf{1}\{UPU^{\dagger}\in\pm{\cal Z}\}P\\
    &=m_P P,
\end{align}
where $m_P\coloneqq\mathbb{E}_{U} \mathbf{1}\{UPU^{\dagger}\in\pm{\cal Z}\}$.\\

For a given unitary ensemble $\mathcal{E}$, we define the Pauli correlation function $\tau(P, Q)$ as
\begin{equation}
    \tau(P, Q; n) = \mathbb{E}_{U\sim {\cal E}}\!\left[\mathbf{1}\{UPU^{\dagger}\in\pm{\cal Z}\}\mathbf{1}\{UQU^{\dagger}\in\pm{\cal Z}\}\right],
\end{equation}
where $n$ denotes the system size on which Pauli operators $P$ and $Q$ are defined. We omit $n$ when clear from context.
For example, in the case $\mathcal{E} = \mathrm{Cl}(n)$, the $n$-qubit Clifford group, $\tau(P,Q;n)$ is given by
\begin{equation}
    \tau(P, Q;n) =
    \begin{cases}
        1 & P=I,Q=I\\
        \frac{1}{2^n+1} &P=I,Q\neq I \quad\text{or}\quad P\neq I,Q=I \quad\\
        \frac{1}{2^n+1} & P=Q\neq I\\
        \frac{2}{(2^n+1)(2^n+2)} &P\neq I, Q\neq I, P\neq Q, PQ=QP\\
        0&PQ=-QP
    \end{cases}
\end{equation}
and for $\mathcal{E} = \mathrm{Cl}(k)^{\otimes n/k}$, $\tau(P, Q;n)$ is blockwise multiplicative, written as
\begin{equation}
    \tau(P, Q;n) = \prod_{i=1}^{n/k} \tau(P[i], Q[i];k),
\end{equation}
where $P[i]$ and $Q[i]$ denote the Pauli operators acting on the $i$-th block.\\

In the main text of Theorem~\ref{thm1:rankr}, to simplify $\hat{\rho}^2$, we use the following
\begin{equation}
    V_1^{\dagger}{\cal M}(O)V_1={\cal M}(V_1^{\dagger}OV_1),
\end{equation}
where $V_1$ is the unitary sampled from $U_1$ layer ($V_1 \in \text{Cl}(k)^{\otimes n/k}$). This can be checked as follows:
\begin{align}
    V_1^{\dagger}{\cal M}(O)V_1
    &=V_1^{\dagger}\left(\mathbb{E}_{U_1, U_2}\sum_{b}U_1^{\dagger}U_2^{\dagger}\ket{b}\!\bra{b}U_2U_1\cdot\text{Tr}(U_2U_1OU_1^{\dagger}U_2^{\dagger})\right)V_1\\
    &=\mathbb{E}_{\widetilde{U}_1, U_2}\sum_{b}\widetilde{U}_1^{\dagger}U_2^{\dagger}\ket{b}\!\bra{b}U_2\widetilde{U}_1\cdot\text{Tr}(U_2\widetilde{U}_1V_1^{\dagger}OV_1\widetilde{U}_1^{\dagger}U_2^{\dagger})\\
    &={\cal M}(V_1^{\dagger}OV_1),
\end{align}
where we set $\widetilde{U}_1=U_1V_1$.
By taking $O\leftarrow{\cal M}^{-1}(O)$ and applying ${\cal M}^{-1}$ to both sides, we obtain
\begin{equation}
    {\cal M}^{-1}(V_1^{\dagger}OV_1)=V_1^{\dagger}{\cal M}^{-1}(O)V_1.
\end{equation}

\section{LIMITATIONS ON BIASED ESTIMATOR}\label{appx:biased}
Previous work \cite{schuster2025random} shows that unitary ensemble ${\cal E}$ generated from shallow quantum circuit of depth $\mathcal{O}\!\left( \log\!\left(\tfrac{n}{\epsilon}\right)\, t \,\mathrm{poly}\log(t) \right)$ form $\epsilon$-approximate unitary $t$-designs 
\begin{equation}
    (1-\epsilon)\Phi_H \preceq \Phi_{\cal E} \preceq (1+\epsilon)\Phi_H,
\end{equation}
where $\Phi \preceq \Phi'$ means that $\Phi'-\Phi$ is a completely-positive map and $\Phi_{\mathcal{E}}(\cdot)$ is defined as
\begin{equation}
    \Phi_{\mathcal{E}}(A) \coloneqq \mathbb{E}_{U \sim \mathcal{E}}
\!\left[\, U^{\otimes t} A \, U^{\dagger \otimes t} \,\right].
\end{equation}
Based on the above results, one naturally expects that circuits of depth ${\cal O}(\log n)$ 
can form approximate unitary $3$-designs, and hence that the statistical behavior of shallow-depth circuits closely approximates that of $\text{Cl}(n)$. Indeed, this has been shown in 
various examples \cite{schuster2025random}. However, in settings such as randomized measurements~\cite{elben2023randomized}, where random 
unitaries are applied followed by classical post-processing, one should be cautious in 
directly applying the above reasoning. As discussed in Appendix~\ref{appx:shadow}, obtaining an unbiased estimator requires 
first determining the shadow channel ${\cal M}$. Notably, prior work has shown that by simply using the shadow channel 
${\cal M}_{\text{Haar}}(\cdot)=\frac{1}{1+d}((\cdot)+\text{Tr}(\cdot)I)$ from Haar random unitaries ensemble, one can estimate 
the fidelity and, more generally, any observable $O$ with bounded trace norm, 
with the same sample complexity as that obtained using 
${\cal E} = \text{Cl}(n)$ \cite{schuster2025random}. A similar result also holds for purity estimation using multi-shot case \cite{cho2025entanglement}. 

One might expect that the above result remains valid for estimating other arbitrary physical 
quantities as well, but simple counterexamples can be found. 
First, in the single-shot case, when estimating purity using ${\cal M}_{\text{Haar}}$, the 
accuracy of estimation cannot be guaranteed due to the bias of the estimator, which is also 
consistent with numerical simulations \cite{cioli2025approximate}. 
Second, employing ${\cal M}_{\text{Haar}}$ for estimating a Pauli operator $P$ can lead to large errors. 
For example, consider estimating the expectation value of $O = Z_1$ using the circuit in 
Fig.~\ref{fig2}(b), where each block consists of $k$-qubit. In this case, the unbiased estimator is 
given by
\[
    \hat{o} = \text{Tr}(P\hat{\rho}) = \text{Tr}\!\left(P{\cal M}^{-1}\big(U^{\dagger}\ket{b}\!\bra{b}U\big)\right) 
    = m_P^{-1}\text{Tr}(UPU^{\dagger}\ket{b}\!\bra{b}),
\]
where $m_P = \frac{3\cdot 2^k + 1}{(2^k + 1)^3}$ provided that $k\leq n/2$ and $k$ divides $n$. 
However, if one uses \({\cal M}_{\mathrm{Haar}}\) instead, the Pauli component is rescaled as if 
$m_P = 1/(2^n+1)$, rather than by the correct value 
$m_P = (3\cdot 2^k+1)/(2^k+1)^3$. 
For $k=\mathcal O(\log n)$, the corresponding rescaling factors differ exponentially in $n$, which leads to a large bias in the estimator.

Here, we have identified the limitations of the biased estimator based on 
${\cal M}_{\text{Haar}}$, but this does not imply that all biased estimators 
are problematic. Therefore, it may be possible to employ weakly biased estimators 
that allow for efficient classical post-processing while still admitting statistical 
guarantees for accurate estimation. We leave this for future work.

\section{Proof of Theorem~\ref{thm1:rankr}}\label{appx:thm1}

In this appendix, we provide the details of the proof of Theorem~\ref{thm1:rankr}. 
Throughout the proof, we use the two-layer brickwork Clifford ensemble shown in Fig.~\ref{fig2}(b). 
For a single measurement outcome, we consider the unbiased estimator
\begin{align}
    \hat{\rho} 
    &= \mathcal{M}^{-1}\!\left(U^{\dagger}\ket{b}\!\bra{b}U\right) \\
    &= \mathcal{M}^{-1}\!\left(
    U^{\dagger}_1 U^{\dagger}_2\ket{b}\!\bra{b}U_2 U_1
    \right),
\end{align}
where $U=U_2U_1$, and $U_1$ and $U_2$ denote the two unitary layers of the measurement circuit. 
Here, $\mathcal{M}$ is the shadow channel associated with this ensemble. 
By construction,
\begin{equation}
    \mathbb{E}\hat{\rho}=\rho .
\end{equation}

Given $T$ independent measurement outcomes, we define the empirical estimator by
\begin{equation}
    \hat{\rho}_T=\frac{1}{T}\sum_{i=1}^{T}\hat{\rho}_i ,
\end{equation}
where each $\hat{\rho}_i$ is an independent copy of the single-shot estimator above. 
Thus, it remains to control the deviation of $\hat{\rho}_T$ from its mean. 
Since the target state has rank at most \(r\), an operator-norm concentration
bound for the linear estimator can be converted into a trace-norm guarantee by
a randomized low-rank reconstruction step. We use the following standard
randomized block Krylov guarantee.

\begin{fact}[Randomized block Krylov low-rank approximation~\cite{musco2015randomized}]
\label{fact:block_krylov}
    Let \(A\in\mathbb C^{N\times D}\), let \(k\le \min\{N,D\}\), and let
    \(\epsilon_{\rm alg}\in(0,1)\). 
    There is a randomized block Krylov algorithm which returns a matrix
    \(Z\in\mathbb C^{N\times k}\) with orthonormal columns such that, with high
    probability,
    \begin{equation}
    \label{eq:block_krylov_guarantee}
        \left\|A-ZZ^\dagger A\right\|_{\rm op}
        \le
        (1+\epsilon_{\rm alg})
        \min_{\operatorname{rank}(B)\le k}
        \|A-B\|_{\rm op}.
    \end{equation}
    Equivalently, the right-hand side is
    \((1+\epsilon_{\rm alg})\|A-A_k\|_{\rm op}\), where
    \(A_k\in\arg\min_{\operatorname{rank}(B)\le k}\|A-B\|_{\rm op}\)
    is a best rank-\(k\) approximation of \(A\).
    
    The runtime is
    \begin{equation}
    \label{eq:block_krylov_runtime}
        O\!\left(
            \operatorname{nnz}(A)\,
            \frac{k\log D}{\sqrt{\epsilon_{\rm alg}}}
            +
            N k^2
            \frac{\log^2 D}{\epsilon_{\rm alg}}
            +
            k^3
            \frac{\log^3 D}{\epsilon_{\rm alg}^{3/2}}
        \right),
    \end{equation}
    up to lower-order terms. Here \(\operatorname{nnz}(A)\) denotes the number of
    nonzero entries of \(A\). Once \(Z\) is obtained, the rank-\(k\)
    approximation \(ZZ^\dagger A\) can be formed explicitly as
    \(Z(Z^\dagger A)\) in \(O(NDk)\) additional time. In particular, for dense
    square \(A\in\mathbb C^{D\times D}\) and constant \(\epsilon_{\rm alg}\),
    computing \(Z\) and forming \(ZZ^\dagger A\) explicitly costs
    \(\widetilde O(D^2 k)\).
\end{fact}

\begin{remark}
    Ref.~\cite{musco2015randomized} states the randomized block Krylov guarantee
    for real matrices, with \(ZZ^T A\) in place of \(ZZ^\dagger A\). 
    We use the standard complex analogue, obtained by replacing transposes with
    adjoints and using complex Gaussian test matrices. This does not change the
    stated approximation guarantee or the asymptotic runtime.
\end{remark}

We apply Fact~\ref{fact:block_krylov} to \(A=\hat\rho_T\), with \(k=r\) and
\(\epsilon_{\rm alg}=1\). Let
\(\hat\rho_T^{\rm out}=ZZ^\dagger\hat\rho_T\) be the resulting rank-\(r\)
reconstruction. This output is not necessarily Hermitian or positive
semidefinite, but this is sufficient for Theorem~\ref{thm1:rankr}, which only requires a trace-norm approximation to the target state. If a physical output is desired, a rank-$r$, unit-trace PSD estimator can be obtained with the same asymptotic sample and post-processing complexity via Rayleigh--Ritz compression~\cite{halko2011finding} followed by eigenvalue projection onto the probability simplex.

Suppose that $\|\hat\rho_T-\rho\|_{\rm op}\le \eta$. Since
\(\operatorname{rank}(\rho)=r\), we have
\begin{equation}
    \min_{\operatorname{rank}(B)\le r}
    \|\hat\rho_T-B\|_{\rm op}
    \le
    \|\hat\rho_T-\rho\|_{\rm op}
    \le
    \eta.
\end{equation}
Therefore, on the success event of the randomized low-rank approximation algorithm,
\(\|\hat\rho_T-\hat\rho_T^{\rm out}\|_{\rm op}\le 2\eta\). By the triangle
inequality,
\begin{equation}
    \|\hat\rho_T^{\rm out}-\rho\|_{\rm op}
    \le
    \|\hat\rho_T^{\rm out}-\hat\rho_T\|_{\rm op}
    +
    \|\hat\rho_T-\rho\|_{\rm op}
    \le
    3\eta .
\end{equation}
Moreover, \(\operatorname{rank}(\hat\rho_T^{\rm out}-\rho)
\le \operatorname{rank}(\hat\rho_T^{\rm out})+\operatorname{rank}(\rho)\le 2r\).
Hence,
\begin{equation}
\label{eq:block_krylov_trace_conversion}
    \|\hat\rho_T^{\rm out}-\rho\|_{\rm tr}
    \le
    2r\|\hat\rho_T^{\rm out}-\rho\|_{\rm op}
    \le
    6r\eta .
\end{equation}
Thus it suffices to prove the operator-norm concentration bound
\begin{equation}
\label{eq:sufficient_operator_bound_rankr}
    \|\hat\rho_T-\rho\|_{\rm op}
    \le
    \frac{\epsilon_{\rm tr}}{6r}.
\end{equation}

We prove the required operator-norm concentration using the matrix Bernstein inequality. 
For this purpose, it is enough to bound two quantities: the operator norm 
$\|\hat{\rho}\|_{\mathrm{op}}$, and the second moment 
$\|\mathbb{E}\hat{\rho}^2\|_{\mathrm{op}}$, which controls the variance parameter.

\begin{theorem*}[Matrix Bernstein Inequality]
    Let $X_1, X_2, \dots, X_T$ be independent random matrices in $\mathbb{C}^{d \times d}$ 
    satisfying $\mathbb{E}[X_i] = 0$ and $\|X_i\|_{\mathrm{op}} \leq R$ almost surely.
    Then, for all $t \geq 0$,
    \[
    \Pr\!\left( \left\| \sum_{i=1}^T X_i \right\|_{\mathrm{op}} \geq t \right) 
    \leq d \exp\!\left( \frac{-t^2/2}{\sigma^2 + Rt/3} \right),
    \]
    where $\sigma^2 \coloneqq 
        \left\| \sum_{i=1}^T \mathbb{E}[X_i^2] \right\|_{\mathrm{op}}$.
\end{theorem*}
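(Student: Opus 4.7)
The plan is to follow Tropp's matrix Laplace transform approach, reducing the operator-norm tail to a matrix MGF bound that is then controlled using Lieb's concavity theorem. Throughout I will assume each $X_i$ is Hermitian (the general case follows by the standard Hermitian dilation $\tilde{X}_i=\begin{pmatrix}0 & X_i \\ X_i^{\dagger} & 0\end{pmatrix}$ of dimension $2d$, which absorbs a harmless factor into the dimensional prefactor but preserves $R$ and $\sigma^2$). Writing $Y=\sum_{i=1}^{T} X_i$, I note that $\|Y\|_{\mathrm{op}}=\max\{\lambda_{\max}(Y),\lambda_{\max}(-Y)\}$, so a union bound reduces the statement to a one-sided tail on $\lambda_{\max}(Y)$, after which I can replace $X_i$ by $-X_i$ to get the other side.

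First I would apply the matrix Chernoff-Markov inequality: for any $\theta>0$,
\begin{equation*}
\Pr\!\left(\lambda_{\max}(Y)\geq t\right)\leq e^{-\theta t}\,\mathbb{E}\,\mathrm{tr}\,e^{\theta Y},
\end{equation*}
which follows from $e^{\theta \lambda_{\max}(Y)}\leq \mathrm{tr}\,e^{\theta Y}$ (since $e^{\theta Y}\succeq 0$) and scalar Markov. The next and conceptually hardest step is to decouple the expectation over the sum into a sum of per-term contributions. Using Lieb's concavity theorem (the map $H\mapsto \mathrm{tr}\exp(C+H)$ is concave on Hermitian matrices), Jensen's inequality, and induction on $T$, one obtains the matrix subadditivity of cumulants,
\begin{equation*}
\mathbb{E}\,\mathrm{tr}\exp\!\left(\sum_{i=1}^{T}\theta X_i\right)\leq \mathrm{tr}\exp\!\left(\sum_{i=1}^{T}\log\mathbb{E}\,e^{\theta X_i}\right).
\end{equation*}
This is really the heart of the argument, and invoking Lieb is the step I expect to be the main obstacle to presenting fully from scratch; I would take it as a named lemma.

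Next I would bound each matrix cumulant in Loewner order. From the series $e^{\theta X_i}=I+\theta X_i+\sum_{k\geq 2}(\theta X_i)^k/k!$ together with the operator inequalities $X_i^{2k}\preceq R^{2k-2}X_i^{2}$ and $\pm X_i^{2k+1}\preceq R^{2k-1}X_i^{2}$ valid under $\|X_i\|_{\mathrm{op}}\leq R$, and $\mathbb{E}X_i=0$, I get $\mathbb{E}e^{\theta X_i}\preceq I+g(\theta)\,\mathbb{E}X_i^{2}$ with $g(\theta)=(e^{\theta R}-1-\theta R)/R^{2}$. Operator monotonicity of $\log$ then yields $\log\mathbb{E}e^{\theta X_i}\preceq g(\theta)\,\mathbb{E}X_i^{2}$. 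Summing, using monotonicity of $\mathrm{tr}\exp(\cdot)$ in the Loewner order (a corollary of the same functional monotonicity), and the crude $\mathrm{tr}\exp(A)\leq d\,\exp(\lambda_{\max}(A))$, I arrive at
\begin{equation*}
\Pr\!\left(\lambda_{\max}(Y)\geq t\right)\leq d\,\exp\!\left(g(\theta)\,\sigma^{2}-\theta t\right).
\end{equation*}

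Finally I would optimize $\theta$. The clean Bennett-style bound $g(\theta)\leq \theta^{2}/\bigl(2(1-R\theta/3)\bigr)$ for $\theta\in(0,3/R)$, which follows by comparing Taylor coefficients, gives
\begin{equation*}
\Pr\!\left(\lambda_{\max}(Y)\geq t\right)\leq d\,\exp\!\left(\frac{\theta^{2}\sigma^{2}}{2(1-R\theta/3)}-\theta t\right),
\end{equation*}
and the choice $\theta=t/(\sigma^{2}+Rt/3)$ (which stays in the valid range) yields the advertised exponent $-t^{2}/2/(\sigma^{2}+Rt/3)$. Combining with the symmetric bound on $\lambda_{\max}(-Y)$ via a union bound (absorbed into the prefactor $d$ after the dilation step) completes the proof.
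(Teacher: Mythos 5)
The paper does not actually prove this statement: the matrix Bernstein inequality is imported as a standard tool (stated in Appendix~C and applied to $X_i=\hat{\rho}_i-\rho$), so there is no in-paper proof to compare against. Your proposal is the canonical Ahlswede--Winter/Tropp Laplace-transform argument, and its skeleton is correct: the Chernoff step, the subadditivity of matrix cumulants, the Loewner-order bound $\mathbb{E}e^{\theta X_i}\preceq I+g(\theta)\,\mathbb{E}X_i^2$ with $g(\theta)=(e^{\theta R}-1-\theta R)/R^2$, the Bennett-style estimate $g(\theta)\leq \theta^2/(2(1-R\theta/3))$, and the choice $\theta=t/(\sigma^2+Rt/3)$ all check out. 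Two points need fixing. First, your parenthetical statement of Lieb's theorem is wrong as written: the map $H\mapsto\mathrm{tr}\exp(C+H)$ is \emph{convex} in the Hermitian argument $H$, not concave; the result you actually need is that $A\mapsto\mathrm{tr}\exp(C+\log A)$ is concave on positive definite $A$, which is what lets Jensen push the expectation inside the logarithm to produce $\log\mathbb{E}e^{\theta X_i}$ (the conclusion you draw is the correct one, so this is a misattribution rather than a broken step, but as stated the lemma is false). Second, the prefactor: the one-sided bound on $\lambda_{\max}$ carries prefactor $d$, so the two-sided union bound for $\|\cdot\|_{\mathrm{op}}$ (or equivalently the $2d$-dimensional Hermitian dilation) yields $2d$, not $d$; you cannot literally absorb the factor of $2$ into $d$. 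This discrepancy is inherited from the paper's own statement of the theorem and is harmless for its application (it only changes $\ln(d/\delta)$ to $\ln(2d/\delta)$), but a careful write-up should either state the $\lambda_{\max}$ version with prefactor $d$ or the operator-norm version with prefactor $2d$.
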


We apply the inequality to
\begin{equation}
    X_i = \hat{\rho}_i-\rho .
\end{equation}
Then $\mathbb{E}X_i=0$, and
\begin{equation}
    \|X_i\|_{\mathrm{op}}
    \leq \|\hat{\rho}_i\|_{\mathrm{op}}+\|\rho\|_{\mathrm{op}}
    \leq \|\hat{\rho}_i\|_{\mathrm{op}}+1.
\end{equation}
Moreover, 
\begin{equation}
    \left\|\mathbb{E}X_i^2\right\|_{\mathrm{op}}
    \leq
    \left\|\mathbb{E}\hat{\rho}_i^2\right\|_{\mathrm{op}} .
\end{equation}
Therefore, the proof reduces to deriving upper bounds on 
$\|\hat{\rho}\|_{\mathrm{op}}$ and 
$\|\mathbb{E}\hat{\rho}^2\|_{\mathrm{op}}$. First, $\lVert \hat{\rho}\rVert_{\mathrm{op}}$ is bounded as follows: 
\begin{align}
        \left\lVert \hat{\rho} \right\rVert_{\text{op}} 
        &= \left\lVert \mathcal{M}^{-1}(U^{\dagger}_1 U^{\dagger}_2\ket{b}\!\bra{b}U_2 U_1) \right\rVert_{\text{op}} \\
        &= \left\lVert \sum_P \frac{\operatorname{Tr}(U_2U_1 P U_1^{\dagger} U_2^{\dagger}\ket{b}\!\bra{b})}{2^n} m_P^{-1} P \right\rVert_{\text{op}} \\
        &\leq \sum_P \frac{\left|\operatorname{Tr}(U_2U_1 P U_1^{\dagger}U_2^{\dagger}\ket{b}\!\bra{b})\right|}{2^n} m_P^{-1} \\
        &\leq \max_P(m_P^{-1}) \sum_P \frac{\operatorname{Tr}(U_2U_1 P U_1^{\dagger}U_2^{\dagger}\ket{b}\!\bra{b})^2}{2^n} \\
        &\leq (2^k+1)^{n/k}.
    \end{align}
It remains to bound the second-moment term, which controls the variance parameter in the matrix Bernstein inequality. 

Using the identity ${\cal M}^{-1}(V_1^\dagger O V_1) = V_1^\dagger {\cal M}^{-1}(O)V_1$, proved in Appendix~\ref{appx:shadow} for \(V_1\in\mathrm{Cl}(k)^{\otimes n/k}\), we obtain
\begin{equation} \label{eq:shadow_out}
    \hat{\rho}={\cal M}^{-1}(U^{\dagger}_1 U^{\dagger}_2\ket{b}\!\bra{b}U_2 U_1)
    =U^{\dagger}_1{\cal M}^{-1}( U^{\dagger}_2\ket{b}\!\bra{b}U_2) U_1.
\end{equation}
    Then, $\hat{\rho}^2 $ is simplified to
\begin{equation}
    \hat{\rho}^2=(U^{\dagger}_1{\cal M}^{-1}( U^{\dagger}_2\ket{b}\!\bra{b}U_2) U_1)^2
    =U^{\dagger}_1{\cal M}^{-1}( U^{\dagger}_2\ket{b}\!\bra{b}U_2)^2 U_1.
\end{equation}
    Now, we can compute the average $\mathbb{E}\hat{\rho}^2$
    as follows:
    \begin{align}
    &\mathbb{E}_{U_1, U_2, b} \hat{\rho}^2
        = \mathbb{E}_{U_1, U_2} \sum_{b} 
        U^{\dagger}_1 \mathcal{M}^{-1}\!\left( U^{\dagger}_2\ket{b}\!\bra{b}U_2 \right)^2 U_1 
        \operatorname{Tr}\!\left( \rho U_1^{\dagger}U_2^{\dagger}\ket{b}\!\bra{b}U_2 U_1 \right)\\
    & = \mathbb{E}_{U_1, U_2} \sum_{b}\sum_{P,Q,R} 
        \frac{m_P^{-1} m_Q^{-1}}{8^n} \,
        \operatorname{Tr}\!\left( \rho U_1^{\dagger}RU_1 \right) 
        (U_2, P, b)(U_2, Q, b)(U_2, R, b) \, U_1^{\dagger} P Q U_1 & \\
    & = \mathbb{E}_{U_1, U_2} \sum_{P,Q} 
        \frac{m_P^{-1} m_Q^{-1}}{4^n} \,
        \mathbf{1}\!\left\{ U_2 P U_2^{\dagger} \in \pm \mathcal{Z} \right\}
        \mathbf{1}\!\left\{ U_2 Q U_2^{\dagger} \in \pm \mathcal{Z} \right\}
        \operatorname{Tr}\!\left( \rho U_1^{\dagger} P Q U_1 \right) U_1^{\dagger} P Q U_1\\
    & = \mathbb{E}_{U_1} \sum_{P,Q} 
        \frac{m_P^{-1} m_Q^{-1}}{4^n} \, \tau_{2}(P,Q) \,
        \operatorname{Tr}\!\left( \rho U_1^{\dagger} P Q U_1 \right) U_1^{\dagger} P Q U_1\\
    & = \sum_{P,Q} 
        \frac{m_P^{-1} m_Q^{-1}}{4^n} \, \tau_{2}(P,Q) \,
        \sum_{B \subseteq A} 
        \left( \frac{1}{4^k - 1} \right)^{|A|} 
        (-1)^{|A|-|B|} \, (2^k)^{|B|} \rho_B \otimes I_{A-B}, &
    \end{align}
    where $A = A(P,Q) \subseteq [n/k]$ denotes the set of block indices of the $U_1$ layer such that $P$ and $Q$ differ on that block. To obtain the last line, we use the property of unitary 2-designs. Using the above relation, we can obtain an upper bound on $\left\lVert\mathbb{E}_{U_1, U_2, b} \hat{\rho}^2\right\rVert_{\text{op}}$ as follows:
    \begin{align}
    \left\lVert \mathbb{E}_{U_1, U_2, b} \hat{\rho}^2\right\rVert_{\text{op}}
    &\leq\sum_{P,Q} 
        \frac{m_P^{-1} m_Q^{-1}}{4^n} \, \tau_{2}(P,Q)\!
        \left\lVert \sum_{B \subseteq A} 
        \left( \frac{1}{4^k - 1} \right)^{|A|}\!
        (-1)^{|A|-|B|} \, (2^k)^{|B|} \rho_B \otimes I_{A-B}\right\rVert_{\text{op}} \\
    &\leq \sum_{P,Q} \frac{m_P^{-1} m_Q^{-1}}{4^n} \, \tau_2(P,Q) 
    \left( \frac{1}{2^k-1} \right)^{w_{k,1}(PQ)} \\[6pt]
    &\leq \max_{P,Q} \left( \frac{m_P^{-1} m_Q^{-1}}{4^n} \right) 
    \sum_{P,Q} \tau_2(P,Q) 
    \left( \frac{1}{2^k-1} \right)^{w_{k,1}(PQ)} \\[6pt]
    &\leq \left( 1+\frac{1}{2^k} \right)^{2n/k} 
    \sum_{P,Q} \tau_2(P,Q) f_1(P,Q),
    \end{align}
    where $f_1(P,Q;n) =(2^k-1)^{-w_{k,1}(PQ)}=\prod_{i\in U_1 \text{layer}} f(P[i],Q[i];k)$, where $f(P,Q;k)=(2^k-1)^{-\mathbf{1}\{P\neq Q\}}$.
    We can rewrite the sum over $P$ and $Q$ into a form that is more convenient to compute.
    First, we divide each $k$-qubit block into two parts. This gives a total of $2m$ sub-blocks. Then, $\tau_2$ acts on the pairs independently \{$(1,2), (3,4)\dots, (2m-1,2m)$\}, while $f_1$ acts on the pairs independently \{$(2,3), (4,5) \dots, (2m-2,2m-1),(2m,1)$\}. We denote by $\bullet$ the case when $P[i] = Q[i]$ within a $i$-th sub-blocks, and by $\times$ otherwise. Using this notation, the given sum $\sum_{P,Q}$ can be rewritten as a sum over configurations $\textbf{c}\in\{\bullet,\times\}^{2m}$, together with a sum over pairs $(P,Q)$ compatible with each $\textbf{c}$, i.e., $\sum_{\textbf{c}}\sum_{(P,Q)  \text{ in } \textbf{c}}$. Accordingly, the above expression can be reorganized as follows:
    \begin{align}
        \left\lVert\mathbb{E}\hat{\rho}^2\right\rVert_{\text{op}}&\leq(1+1/2^k)^{2m}\sum_{P,Q}\tau_2(P,Q)f_1(P,Q)\\
        &= (1+1/2^k)^{2m}\sum_{\textbf{c}\in {\{\bullet,\times\}}^{2m}} \sum_{(P,Q)  \text{ in } \textbf{c}}\tau_2(P,Q)f_1(P,Q)\\
        &= (1+1/2^k)^{2m} \sum_{\textbf{c}\in {\{\bullet,\times\}}^{2m}}f_1(\textbf{c})\sum_{(P,Q)  \text{ in } \textbf{c}}\tau_2(P,Q)\\
        &= (1+1/2^k)^{2m} \sum_{\textbf{c}\in {\{\bullet,\times\}}^{2m}}f_1(\textbf{c})g_2(\textbf{c})\\
        &= (1+1/2^k)^{2m}\sum_{c_1, c_2,...,c_{2m}\in {\{\bullet,\times\}}}f(c_1,c_2)f(c_3,c_4)...f(c_{2m-1,2m}) \nonumber \\
        &\qquad\qquad\qquad\qquad\qquad\qquad \quad        \times g(c_2,c_3)g(c_4,c_5)...g(c_{2m},c_1)\\
        &=(1+1/2^k)^{2m}\text{Tr}((FG)^m) \label{eq:bound_rho_squred}\\
        &=(1+1/2^k)^{2m}(\lambda_{+}^{m}+\lambda_{-}^{m}),\label{eq:thm1_1}
    \end{align}
    where $g_2(\textbf{c})=\sum_{(P,Q)  \text{ in } \textbf{c}}\tau_2(P,Q)$ and $g(c_i,c_{i+1})=\sum_{(P,Q)  \text{ in } (c_i,c_{i+1})}\tau_2(P,Q;k)$. 
    Here, $F$ and $G$ are 2$\times$2 matrices defined by\\
    \begin{equation}\label{eq:F_G}
        F=
        \begin{pmatrix}
        1 & \frac{1}{2^k-1} \\
        \frac{1}{2^k-1} & \frac{1}{2^k-1}
        \end{pmatrix},
        \quad
        G=\frac{1}{2^k+1}
        \begin{pmatrix}
        2^{2k} & 2^k(2^k-1) \\
        2^k(2^k-1) & 2^k(2^k-1)^2
        \end{pmatrix}
    \end{equation}
    and $\lambda_{\pm}$ are eigenvalues of $FG$ and are given by
    \begin{align}
        \lambda_{\pm}
        &=\frac{2^k(2^{k+1}+1\pm\sqrt{2^{k+4}-7})}{2(2^k+1)}\\
        &\leq2^k+\sqrt{5\cdot2^k} \label{eq:thm1_2},
    \end{align}
    where the last inequality holds when $k\geq2$. 
    
We now compute the matrices $F$ and $G$ explicitly. For the $F$ matrix, we have $F(\bullet, \bullet)=1$, while the remaining entries 
$F(\bullet, \times)$, $F(\times, \bullet)$, and $F(\times, \times)$ are all equal to 
$1/(2^k-1)$, since the Pauli operators $P_i$ and $Q_i$ differ on each block.
For the $G$ matrix, where $G(c_1,c_2)=\sum_{(P,Q)\text{in}(c_1,c_2)}\tau(P,Q;k)$, because $G(\bullet, \times) = G(\times, \bullet)$, it suffices to compute $G(\bullet,\bullet)$, $G(\bullet,\times)$, and $G(\times,\times)$. 
\begin{figure*}[t]
    \centering
    \includegraphics[width=1\linewidth, trim=1cm 11.5cm 5.5cm 3cm, clip]{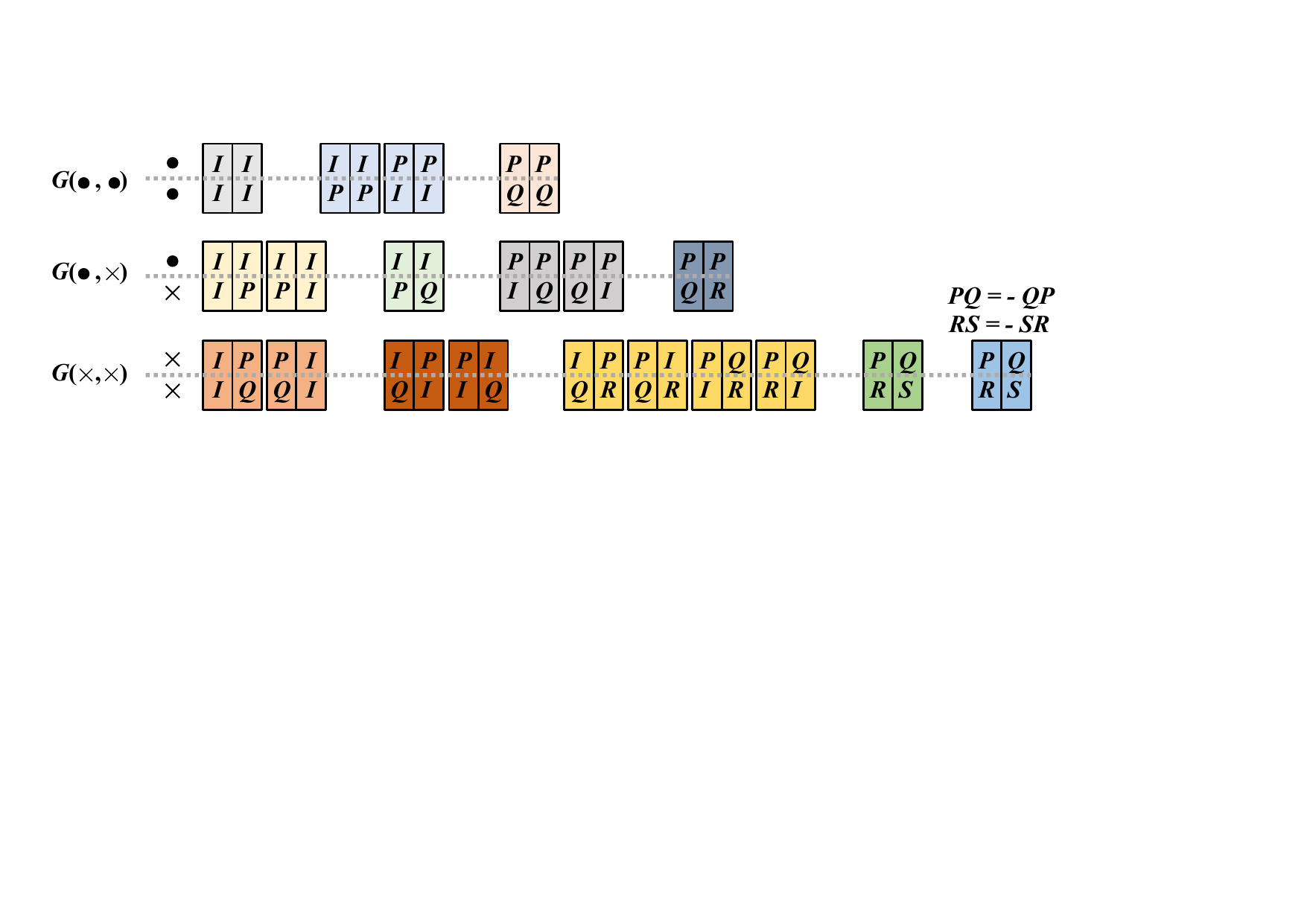}
    \caption{Case classification by matrix elements of $G$. Each vertical rectangle corresponds to a $k$-qubit Pauli operator, and two of them form a pair(e.g. $P[i]$ and $Q[i]$ in the main text). A single block is further divided into two sub-blocks by gray dashed line, and depending on whether the adjacent sub-blocks are identical or different, the cases are categorized as $\bullet$ or $\times$. $I$ denotes the identity, while $P, Q, R, S$ represent non-identity Pauli operators.}
    \label{fig4}
\end{figure*}

In Fig.~\ref{fig4}, we enumerate the Pauli-pair configurations used to compute the matrix elements of \(G\). 
Each colored box represents a pair of \(k\)-qubit Pauli operators on one block, with the block divided into two sub-blocks. 
For each sub-block, we use \(\bullet\) when the two Pauli operators are identical on that sub-block, and \(\times\) otherwise. 
Within each case, we further distinguish whether the identity operator \(I\) appears. 
For instance, a \(\bullet\)-sub-block can be of the form \((I,I)\) or \((P,P)\) with \(P\neq I\), whereas a \(\times\)-sub-block can be of the form \((I,P)\), \((P,I)\), or \((P,Q)\) with \(P\neq I\), \(Q\neq I\), and \(P\neq Q\). 
Only Pauli pairs that commute on the full \(k\)-qubit block contribute, since \(\tau(P,Q;k)=0\) for anticommuting block-level pairs. 
In the \(G(\times,\times)\) entry, this also includes the special case where the two Pauli operators anticommute on each sub-block, e.g., \(PQ=-QP\) and \(RS=-SR\), but commute on the full block. 
Configurations with the same contribution are shown in the same color in Fig.~\ref{fig4}. 
Using this classification, we obtain

\begin{align}
& G(\bullet,\bullet)= [1]+ \left[\frac{1}{2^k+1}\cdot(2^k-1)\cdot2 \right]+ \left[\frac{1}{2^k+1}\cdot(2^k-1)^2 \right] = \frac{2^{2k}}{2^k+1}, &&\\\nonumber \\
& G(\bullet,\times)= \left[\frac{1}{2^k+1}\cdot(2^k-1)\cdot2 \right]
  + \left[\frac{2}{(2^k+1)(2^k+2)}\cdot\frac{(2^k-1)(2^k-4)}{2} \right] &&\\
& \qquad\; + \left[\frac{2}{(2^k+1)(2^k+2)}\cdot(2^k-1)^2\cdot2\right]
  + \left[\frac{2}{(2^k+1)(2^k+2)}\cdot\frac{(2^k-1)^2(2^k-4)}{2} \right]= \frac{2^k(2^k-1)}{2^k+1},  &&\\\nonumber \\
& G(\times,\times)= \left[\frac{1}{2^k+1}\cdot(2^k-1)^2\cdot2 \right]
  + \left[\frac{2}{(2^k+1)(2^k+2)}\cdot(2^k-1)^2\cdot2 \right] &&\\
& + \left[\frac{2}{(2^k+1)(2^k+2)}\cdot\frac{(2^k-1)^2(2^k-4)}{2}\cdot4 \right]
  + \left[\frac{2}{(2^k+1)(2^k+2)}\cdot\frac{(2^k-1)^2(2^k-4)^2}{4} \right] &&\\
& + \left[\frac{2}{(2^k+1)(2^k+2)}\cdot\frac{(2^k-1)^22^{2k}}{4} \right]
  = \frac{2^k(2^k-1)^2}{2^k+1}. &&
\end{align}
Therefore, the matrices $F$ and $G$ are given by Eq.~\eqref{eq:F_G}, as desired.

Combining Eq.~\eqref{eq:thm1_1} and Eq.~\eqref{eq:thm1_2} under the assumption $k2^{k/2}= \Omega(n)$, we obtain the following upper bound:
    \begin{align}
        \|\mathbb{E}\hat{\rho}^2\|_{\text{op}}&\leq(1+1/2^k)^{2m}(\lambda_{+}^{m}+\lambda_{-}^{m})\\
        &\leq 2(1+1/2^k)^{2m}\lambda_{+}^{m}\\
        &\leq2(1+1/2^k)^{2m}(2^k+\sqrt{5\cdot2^k})^{m}\\
        &\leq \text{exp}(2+\sqrt{5})2^{n+1}\\
        &\leq {\cal O}(2^n)
    \end{align}
    By applying the Matrix Bernstein inequality to $\hat{\rho}_T=\frac{1}{T}\sum_{i}\hat{\rho}_i$, the following bound holds:
    \begin{equation}
        \Pr\!\left(\lVert \hat{\rho}_T - \rho \rVert_{\mathrm{op}} \geq \epsilon_{\text{op}} \right)
        \leq d \exp\!\left(-\frac{T \epsilon_{\text{op}}^2}{\lVert \mathbb{E}\hat{\rho}^2 \rVert_{\mathrm{op}} + \lVert \hat{\rho} \rVert_{\mathrm{op}} \epsilon_{\text{op}} / 3}\right).
    \end{equation}
    Using the upper bounds on $\lVert \hat{\rho} \rVert_{\mathrm{op}}$ and $\lVert \mathbb{E}\hat{\rho}^2 \rVert_{\mathrm{op}}$, and setting $\epsilon_{\mathrm{op}}=\epsilon_{\mathrm{tr}}/(6r)$, we can replace $\lVert\hat{\rho}_T - \rho \rVert_{\mathrm{op}}$ with $\lVert\hat{\rho}_T - \rho \rVert_{\mathrm{tr}}$. Consequently, the required total number of samples $T$ is
    \begin{equation}
        T = \mathcal{O}\!\left(\frac{d r^2 \ln(d/\delta)}{\epsilon^2}\right).
    \end{equation}

\section{OPEN BOUNDARY CONDITIONS}\label{appx:obc}
\begin{figure*}[t]
    \centering
    \includegraphics[width=0.7\linewidth, trim=0cm 9cm 5cm 2cm, clip]{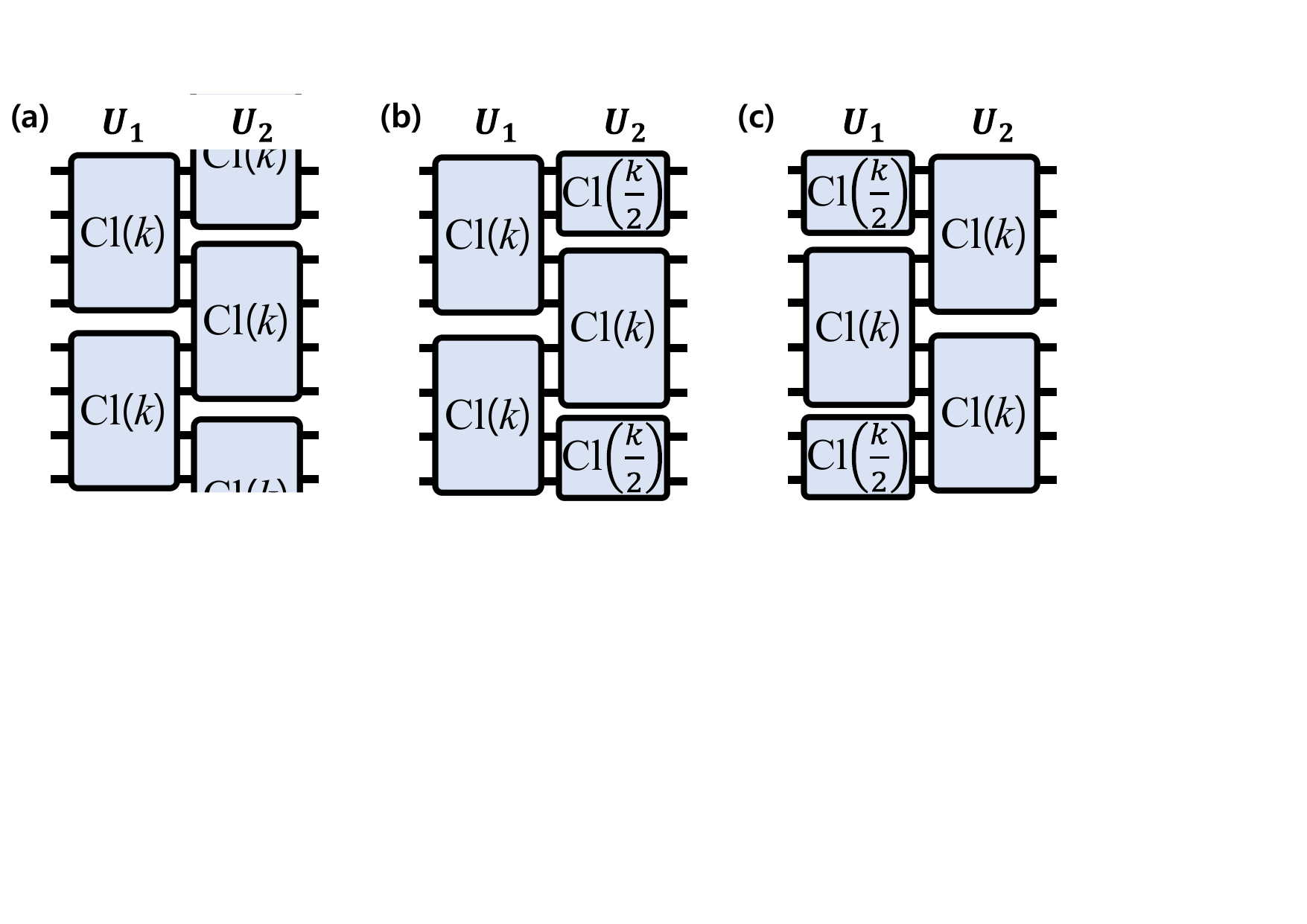}
    \caption{\label{fig5} (a) Two-layer patched brickwork circuit with periodic boundary. 
    (b) and (c) Open boundary cases where a $\text{Cl}(k)$ block is split into $\text{Cl}(k/2)^{\otimes2}$, 
    in $U_2$ and $U_1$, respectively.}
    
\end{figure*}
In Theorem~\ref{thm1:rankr}, we assumed periodic boundary conditions (PBCs) when applying the unitary [Fig.~\ref{fig5}(a)]. However, in the case of superconducting qubits, there exist connectivity constraints between qubits, which necessitate either forming a loop structure or employing additional SWAP gates. Therefore, it is currently more practical to consider open boundary conditions (OBCs). For OBCs, two cases are possible: the first is to split a $\mathrm{Cl}(k)$ into $\mathrm{Cl}(k/2)^{\otimes 2}$ in the $U_2$ layer [Fig.~\ref{fig5}(b)], and the second is to divide a $\mathrm{Cl}(k)$ in the $U_1$ layer [Fig.~\ref{fig5}(c)]. Since $\mathrm{Cl}(k/2)$ can be absorbed into the preceding or succeeding $\mathrm{Cl}(k)$, it does not need to be physically implemented in an actual experiment. Nevertheless, for conceptual clarity, we keep them.

In the PBCs case, we have computed \(\text{Tr}((FG)^m)\), where
\begin{equation}
    F=
    \begin{pmatrix}
    1 & \frac{1}{2^k-1} \\
    \frac{1}{2^k-1} & \frac{1}{2^k-1}
    \end{pmatrix},
    \quad
    G=\frac{1}{2^k+1}
    \begin{pmatrix}
    2^{2k} & 2^k(2^k-1) \\
    2^k(2^k-1) & 2^k(2^k-1)^2
    \end{pmatrix},
\end{equation}
which, in the OBCs setting, is replaced by
\begin{align}
    &\text{Fig.~\ref{fig5}(b)}: \text{Tr}\left((FG)^{m-1}F\widetilde{G}\right),\\
    &\text{Fig.~\ref{fig5}(c)}: \text{Tr}\left((FG)^{m-1}\widetilde{F}G\right),
\end{align}
where 
\begin{equation}
    \widetilde{F}=
    \begin{pmatrix}
    1 & \frac{1}{2^{k/2}-1} \\
    \frac{1}{2^{k/2}-1} & \frac{1}{(2^{k/2}-1)^2}
    \end{pmatrix},
    \quad
    \widetilde{G}=
    \begin{pmatrix}
    2^{k} & 2^k(2^{k/2}-1) \\
    2^k(2^{k/2}-1) & 2^k(2^{k/2}-1)^2
    \end{pmatrix}.
\end{equation}
Let \(h(X) = \text{Tr}((FG)^{m-1}X)\). We have
\begin{equation}
    h(X) = \alpha_{+}\lambda_{+}^{m-1} + \alpha_{-}\lambda_{-}^{m-1},
\end{equation}
where 
\begin{equation}\label{eq:alpha}
    \alpha_{\pm}=\frac{\text{Tr}(FGX)-\lambda_{\mp}\text{Tr}(X)}{\lambda_{\pm}-\lambda_{\mp}}
\end{equation}
and $\lambda_{\pm}$ are eigenvalues of $FG$ defined by
\begin{equation}
    \lambda_{\pm}=\frac{2^k\left(2^{k+1}+1\pm\sqrt{2^{k+4}-7}\right)}{2(2^k+1)}.
\end{equation}
For the two cases used below, $X=F\widetilde{G}$ and $X=\widetilde{F}G$, 
a direct calculation shows that $\alpha_{\pm}\geq 0$ for $k\geq2$. Then, $h(X)$ is bounded as follows:
\begin{align}
    h(X)
    &\leq\lambda_{+}^{m-1}(\alpha_{+}+\alpha_{-})\\
    &=\lambda_{+}^{m-1}\text{Tr}(X)\\
    &=\lambda_{+}^{m}\left(\frac{\text{Tr}(X)}{\lambda_{+}}\right).
\end{align}
Since \(\text{Tr}(F\widetilde{G}) = 2^{k+1} = \Theta(2^k)\), \(\text{Tr}(\widetilde{F}G) = \frac{2^k}{2^k+1}\left(2^{k} + 4 \cdot 2^{k/2} + 3\right) = \Theta(2^k)\), and \(\lambda_{+} = \Theta(2^k)\), it follows that 
\begin{equation}
    h(F\widetilde{G}) = \mathcal O(\lambda_{+}^m),\quad h(\widetilde{F}G) = \mathcal O(\lambda_{+}^m).
\end{equation}
Thus, as in the PBC case,
\begin{equation}
    \left\lVert \mathbb{E}\hat{\rho}^2 \right\rVert_{\mathrm{op}}
    \leq
    \left(1+\frac{1}{2^k}\right)^{2m}
    \mathcal O(\lambda_+^m)
    =
    \mathcal O(2^n),
\end{equation}
under the condition $k2^{k/2}= \Omega(n)$. 
Therefore, even in the OBC setting, one can still recover near-optimal sample complexity for rank-$r$ QST.

We now generalize the preceding argument to the case in which the circuit in
Fig.~\ref{fig5}(b) consists of \(\ell\) disconnected regions. The case of periodic
boundary conditions corresponds to \(\ell=1\). The case of Fig.~\ref{fig5}(c) can be
treated in the same way.

From Eq.~\eqref{eq:bound_rho_squred}, under the condition
\(k2^{k/2}\ge \Omega(n)\), we obtain
\begin{align}
    \left\lVert \mathbb{E}\hat{\rho}^2 \right\rVert_{\mathrm{op}}
    &\leq
    \left(1+\frac{1}{2^k}\right)^{2m}
    \operatorname{Tr}\!\left(
    (FG)^{n_1}F\widetilde{G}
    (FG)^{n_2}F\widetilde{G}
    \cdots
    (FG)^{n_{\ell}}F\widetilde{G}
    \right) \nonumber \\
    &=
    \left(1+\frac{1}{2^k}\right)^{2m}
    \prod_{j=1}^{\ell}
    \operatorname{Tr}\!\left((FG)^{n_j}F\widetilde{G}\right) \nonumber \\
    &\leq
    \left(1+\frac{1}{2^k}\right)^{2m}
    2^{\ell}
    \left(
    \frac{2^k}{2^k+\sqrt{5}\,2^{k/2}}
    \right)^{\ell}
    \left(2^k+\sqrt{5}\,2^{k/2}\right)^{m} \nonumber \\
    &\leq
    \left(1+\frac{1}{2^k}\right)^{2m}
    2^{\ell}
    \left(2^k+\sqrt{5}\,2^{k/2}\right)^{m}
    O(1) \nonumber \\
    &\leq O(2^{n+l}),
\end{align}
where \(n_1+n_2+\cdots+n_{\ell}=m-\ell\). In the second line, we used the fact that
\(F\widetilde{G}\) is a rank-one matrix. Indeed, if \(R=F\widetilde{G}\) is written as
\(R=uv^{\mathsf{T}}\), then for arbitrary compatible matrices \(A_j\),
\begin{equation}
    \operatorname{Tr}\!\left(A_1 R A_2 R \cdots A_{\ell} R\right)
    =
    \prod_{j=1}^{\ell} \operatorname{Tr}(A_j R).
\end{equation}
To obtain the third line, we use the eigenvalue decomposition of \(FG\). Let
\(\lambda_{\pm}\) be the two eigenvalues of \(FG\). Then
\begin{align}
    \operatorname{Tr}\!\left((FG)^nF\widetilde{G}\right)
    &=
    \alpha_{+}\lambda_{+}^{n}
    +
    \alpha_{-}\lambda_{-}^{n} \nonumber \\
    &\leq
    (\alpha_{+}+\alpha_{-})\lambda_{+}^{n} \nonumber \\
    &=
    \operatorname{Tr}(F\widetilde{G})\lambda_{+}^{n} \nonumber \\
    &\leq
    2\cdot 2^k
    \left(2^k+\sqrt{5}\,2^{k/2}\right)^{n},
\end{align}
where $\alpha_{\pm}$ are defined as in Eq.~\eqref{eq:alpha} with $X\leftarrow F\widetilde{G}$, and we used $\lambda_{-} \le \lambda_{+}\leq 2^k+\sqrt{5}\,2^{k/2}$.

Therefore, for \(\ell=O(1)\) disconnected regions, the same sample-complexity scaling as in the PBC case is retained. 
This shows that the QST protocol considered here remains near-optimal even when the measurement ensemble decomposes into disconnected regions and therefore does not necessarily form an approximate unitary design on the full $n$-qubit system.

\section{Proof of Theorem~\ref{thm2:fullrank}}\label{appx:thm2}

In this appendix, we provide the details of the proof of Theorem~\ref{thm2:fullrank}. 
For the full-rank case, we use the block unitary ensemble shown in Fig.~\ref{fig2}(a). 
Within each $k$-qubit block, the measurement basis is chosen from a complete set of mutually unbiased bases. 
Thus, the total number of measurement bases is $N_U=(2^k+1)^{n/k}$.

For each measurement basis, we perform $N_S$ repeated measurements, so that the total number of samples is
\begin{equation}
    T = N_S(2^k+1)^{n/k}.
\end{equation}
We construct an unbiased estimator of $\rho$ by estimating each Pauli coefficient separately. 
For a Pauli operator $P$, let $N_P$ denote the number of measurement bases that provide actionable information about $P$. 
In the block unitary ensemble, this number is given by
\begin{equation}
    N_P = (2^k+1)^{n/k-w_k(P)},
\end{equation}
where $w_k(P)$ is the number of blocks on which $P$ acts nontrivially.

Let $Z_1,\ldots,Z_{N_S N_P}\in\{-1,1\}$ be the measurement outcomes associated with the actionable bases for $P$, satisfying
\begin{equation}
    \mathbb{E}[Z_i] = \operatorname{Tr}(\rho P).
\end{equation}
We define
\begin{equation}
    \mu_P = \sum_{i=1}^{N_S N_P} Z_i .
\end{equation}
Then the estimator used in the proof is
\begin{equation}\label{eq:estimator_QST2}
    \hat{\rho}
    =
    \sum_P \frac{\mu_P}{2^n N_S N_P} P .
\end{equation}
This estimator is unbiased, since
\begin{equation}
    \mathbb{E}\hat{\rho}
    =
    \sum_P \frac{\operatorname{Tr}(\rho P)}{2^n}P
    =
    \rho .
\end{equation}

It remains to prove that $\hat{\rho}$ concentrates around $\rho$ in trace norm. 
We first establish concentration in Frobenius norm and then use 
$\|A\|_{\mathrm{tr}}\leq \sqrt{d}\|A\|_{\mathrm{F}}$. 
To this end, we apply McDiarmid's inequality to
\begin{equation}
    f(\rho)=f(\rho;\mathbf{X})
    =
    \left\lVert \hat{\rho}-\rho\right\rVert_{\mathrm{F}},
\end{equation}
where $\mathbf{X}=\{X_j\}_{j=1}^{T}$ denotes the full set of measurement outcomes.

\begin{theorem*}[McDiarmid's inequality]
    Let $X_1,...,X_T$ be independent random variables, with
    \begin{equation}
        \sup_{X_1,...,X_T,X_i'}\left|f(X_1,...X_i,...X_T)-f(X_1,...X_i',...X_T) \right|\leq C_i 
    \end{equation}
    for all i=1,...,T. Then
    \begin{equation}
        \text{Pr}(|f(X_1,...,X_T)-\mathbb{E}f(X_1,...,X_T)|\geq t)\leq2\text{exp}\left(-\frac{2t^2}{\sum_{i=1}^{T}C_i^2}\right).
    \end{equation}
\end{theorem*}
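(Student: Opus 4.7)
The plan is to prove McDiarmid's inequality via the standard Doob-martingale plus Azuma-Hoeffding route. First, I would construct the Doob martingale associated with $f(X_1,\dots,X_T)$ by setting $Y_i := \mathbb{E}[f(X_1,\dots,X_T)\mid X_1,\dots,X_i]$, so that $Y_0 = \mathbb{E}f$, $Y_T = f(X_1,\dots,X_T)$, and $f-\mathbb{E}f = \sum_{i=1}^{T} D_i$, where $D_i := Y_i - Y_{i-1}$ is a martingale difference with $\mathbb{E}[D_i\mid X_1,\dots,X_{i-1}] = 0$.

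Second, I would use the bounded-differences hypothesis to show that, conditionally on $(X_1,\dots,X_{i-1})=(x_1,\dots,x_{i-1})$, the random variable $D_i$ is supported in an interval of width at most $C_i$. Define $g_i(x_i) := \mathbb{E}[f(x_1,\dots,x_{i-1},x_i,X_{i+1},\dots,X_T)]$; by independence of the coordinates, $Y_i = g_i(X_i)$ and $Y_{i-1} = \mathbb{E}_{X_i}[g_i(X_i)]$. The hypothesis implies $|g_i(x_i)-g_i(x_i')|\leq C_i$ for all $x_i,x_i'$, so the range of $g_i$ is at most $C_i$ and hence the conditional range of $D_i = g_i(X_i)-\mathbb{E}_{X_i}[g_i(X_i)]$ is at most $C_i$.

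Third, I would apply Hoeffding's lemma conditionally: a zero-mean random variable supported in an interval of width $w$ has moment generating function bounded by $\exp(\lambda^2 w^2/8)$. Iterating through the tower property then yields $\mathbb{E}\bigl[e^{\lambda(f-\mathbb{E}f)}\bigr] \leq \exp\!\bigl(\lambda^2 \sum_{i=1}^{T} C_i^2/8\bigr)$. A Chernoff bound gives $\Pr(f-\mathbb{E}f \geq t) \leq \exp\!\bigl(-\lambda t + \lambda^2 \sum_i C_i^2/8\bigr)$, and optimizing at $\lambda^\ast = 4t/\sum_i C_i^2$ yields the one-sided bound $\exp(-2t^2/\sum_i C_i^2)$. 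Applying the same argument to $-f$ and taking a union bound over the two tails produces the factor of $2$ in the stated inequality.

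The main obstacle is the second step. The hypothesis is phrased as a single sup over all of $(X_1,\dots,X_T,X_i')$, and one must convert it into a bound on the conditional fluctuations of the Doob increment $D_i$, which is defined by integrating $f$ against the joint law of $X_{i+1},\dots,X_T$. Independence is what guarantees that this integration preserves the per-coordinate width bound — without it, the conditional law of the later coordinates could depend on $x_i$ and inflate the range of $g_i$. This is the one place where independence does real work beyond enabling the martingale construction itself, and carefully formalizing it (e.g., via a coupling or Fubini argument that holds $X_{i+1},\dots,X_T$ fixed while swapping $X_i$ with $X_i'$) is the only nontrivial part of the proof.
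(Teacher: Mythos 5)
Your proof is correct and is the canonical derivation of McDiarmid's inequality (Doob martingale $Y_i=\mathbb{E}[f\mid X_1,\dots,X_i]$, conditional range bound via independence, Hoeffding's lemma plus the tower property, Chernoff optimization at $\lambda=4t/\sum_i C_i^2$, and a union bound over the two tails for the factor of $2$); the key subtlety you flag --- that independence is what lets the per-coordinate oscillation bound survive the integration over $X_{i+1},\dots,X_T$ defining $g_i$ --- is exactly the right point to isolate. Note that the paper itself gives no proof of this statement: McDiarmid's inequality is stated as a standard black-box tool and used directly in the proof of Theorem 2, so your argument is a complete, self-contained supplement rather than an alternative to anything in the paper.
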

To complete the proof of Theorem~\ref{thm2:fullrank}. We need to prove two key results 
\begin{align}
    &1.~\mathbb{E}[f(\rho)]
    \leq \frac{1}{\sqrt{N_S}}
    \left(\frac{4^k+2^k-1}{2^k+1}\right)^{n/2k},
    \label{eq:bound_expectation} \\
    &2.~f(\rho;\mathbf{X}) - f(\rho;\mathbf{X}^{(i)})
    \leq \frac{2(4^k+2^k-1)^{n/2k}}{N_S(2^k+1)^{n/k}}.
    \label{eq:bound_difference}
\end{align}
where $f(\rho)\!=\!f(\rho; \mathbf{X})\!=\!\left\lVert \hat{\rho}-\rho\right\rVert_{\mathrm{F}}$ and 
$\mathbf{X}\!=\!\{X_j\}_{j=1}^{T}$, with $X_j$ denoting the $j$-th measurement outcome.
\begin{proof}[proof of 1]
    \begin{align}
        \mathbb{E}f(\rho) 
        &\leq \left[\mathbb{E}f(\rho)^2\right]^{1/2}\\
        &= \left[\mathbb{E}\text{Tr}(\hat{\rho}^2) - \text{Tr}(\rho^2)\right]^{1/2}\\
        &= \left[\sum_P \frac{N_S N_P(1-\text{Tr}(\rho P)^2) + N_S^2 N_P^2  \text{Tr}(\rho P)^2}{2^n N_S^2 N_P^2} - \text{Tr}(\rho^2)\right]^{1/2}\\
        &= \left[\sum_P \frac{N_S N_P(1-\text{Tr}(\rho P)^2)+ N_S^2 N_P^2 \text{Tr}(\rho P)^2}{2^n N_S^2 N_P^2} - \text{Tr}(\rho^2)\right]^{1/2}\\
        &= \left[\sum_P \frac{1-\text{Tr}(\rho P)^2}{2^n N_S N_P} + \sum_P\frac{\text{Tr}(\rho P)^2}{2^n}- \text{Tr}(\rho^2)\right]^{1/2}\\
        &\leq \left[\sum_P \frac{1}{2^n N_S N_P}\right]^{1/2}\\
        &=\frac{1}{\sqrt{N_S}}\left(\frac{4^k+2^k-1}{2^k+1} \right)^{\frac{n}{2k}},
    \end{align}
where $N_P=(2^k+1)^{n/k-w_k(P)}$ and $\sum_P \frac{1}{N_P}$ is computed as follows:
\begin{align}
    \sum_P \frac{1}{N_P}&=\frac{1}{(2^k+1)^{n/k}}\sum_{l=0}^{n/k}\binom{n/k}{\ell}(2^k+1)^l (4^k-1)^l\\
    &=\frac{(8^k+4^k-2^k)^{n/k}}{(2^k+1)^{n/k}}\\
    &=\frac{2^n(4^k+2^k-1)^{n/k}}{(2^k+1)^{n/k}}.
\end{align}

\end{proof}
\begin{proof}[proof of 2]
    \begin{align}
        f(\rho; \mathbf{X})-f(\rho; \mathbf{X}^{(i)})
        &=\left\lVert \hat{\rho}-\rho \right\rVert_{\text{F}}-\left\lVert \hat{\rho}'-\rho \right\rVert_{\text{F}}\\
        &\leq \left\lVert \hat{\rho}-\hat{\rho}' \right\rVert_{\text{F}}\\
        &=\left[\text{Tr}((\hat{\rho}-\hat{\rho}')^2)\right]^{1/2}\\
        &=\left[ \sum_{P \triangleright U_i} \frac{(\Delta\mu_P)^2}{2^n N_S^2N_P^2} \right]^{1/2}\\
        &\leq\left[ \frac{4}{2^n N_S^2}\sum_{P \triangleright U_i} \frac{1}{N_P^2} \right]^{1/2}\\
        &=\frac{2}{N_S}\cdot\frac{(4^k+2^k-1)^{\frac{n}{2k}}}{(2^k+1)^{\frac{n}{k}}},
    \end{align}
    where $P \triangleright U_i$ denotes the Pauli operator $P$ for which the $i$-th measurement basis $U_i$ provides actionable information and $\Delta \mu_P=\mu_P-\mu_P'$. Here, $\sum_{P\triangleright U_i}\frac{1}{N_P^2}$ is computed as follows:
    \begin{align}
        \sum_{P\triangleright U_i}\frac{1}{N_P^2}
        &=\frac{1}{(2^k+1)^{2n/k}}\sum_{l=0}^{n/k}\binom{n/k}{\ell}(2^k+1)^{2l}(2^k-1)^{\ell}\\
        &=\frac{(8^k+4^k-2^k)^{n/k}}{(2^k+1)^{2n/k}}\\
        &=\frac{2^n(4^k+2^k-1)^{n/k}}{(2^k+1)^{2n/k}}.
    \end{align}
\end{proof}
Using Eq.~\eqref{eq:bound_expectation} and Eq.~\eqref{eq:bound_difference}, and applying McDiarmid’s inequality, we obtain the following:
    \begin{align}
        \text{Pr}(f(\rho)\geq\epsilon_{F})&=\text{Pr}\left(f(\rho)-\mathbb{E}f(\rho)\geq\epsilon_{F}-\mathbb{E}f(\rho)\right)\\
        &\leq \text{Pr}\left(f(\rho)-\mathbb{E}f(\rho)\geq\epsilon_{F}-\sqrt{\frac{A(n,k)}{N_S}}\right)\\
        &\leq \text{exp}\left(-\frac{2\left(\epsilon_{F}-\sqrt{\frac{A(n,k)}{N_S}}\right)^2}{N_S(2^k+1)^{n/k}\left(\frac{2(4^k+2^k-1)^{n/2k}}{N_S(2^k+1)^{n/k}}\right)^2}\right)\\
        &=\text{exp}\left(-\frac{N_S}{2A(n,k)}\left(\epsilon_{F}-\sqrt{\frac{A(n,k)}{N_S}}\right)^2\right)\\
        &=\delta,
    \end{align}
    where $A(n,k)=\left(\frac{4^k+2^k-1}{2^k+1}\right)^{n/k}$ and $N_S=\left(1+\sqrt{2\ln(1/\delta)}\right)^2A(n,k)/{\epsilon_F^2}$. In order to replace $\lVert\cdot\rVert_{\text{F}}$ with $\lVert\cdot\rVert_{\text{tr}}$ in the above result, 
    one can substitute $\epsilon_F = \epsilon_{\text{tr}}/{\sqrt{2^n}}$. 
    Therefore, the total number of samples required for QST under the assumption $k2^k= \Omega(n)$ is as follows:
    \begin{align}
        T&=N_S(2^k+1)^{n/k}\\
        &=\left(1+\sqrt{2\ln(1/\delta)}\right)^2 2^n\left(4^k+2^k-1\right)^{n/k}/\epsilon_{\text{tr}}^2\\
        &=\left(1+\sqrt{2\ln(1/\delta)}\right)^2 8^n\left(1+\frac{1}{2^k}-\frac{1}{4^k}\right)^{n/k}/\epsilon_{\text{tr}}^2\\
        &\leq \left(1+\sqrt{2\ln(1/\delta)}\right)^2 8^n\left(1+\frac{1}{2^k}\right)^{n/k}/\epsilon_{\text{tr}}^2\\
        &\leq {\cal O}(8^n/\epsilon_{\text{tr}}^2)
    \end{align}

\section{TWO-LAYER BRICKWORK UNITARY ENSEMBLE IN THE FULL RANK CASE}\label{twolayer}
In this Appendix, we prove that using the random unitary ensemble from Fig.~\ref{fig2}(b) also achieves the optimal sample complexity of $\mathcal{O}(d^3/\epsilon^2)$ with the same circuit volume as in Fig.~\ref{fig2}(a). Although the sample complexity is identical, the classical post-processing for QST is more practical when employing the block unitary ensemble [Fig.~\ref{fig2}(a)] \cite{cho2025entanglement}.
For the proof, we need a tighter upper bound on $m_P^{-1}$ than used in the main text. To do that, we can explicitly express $m_P$ using tensor networks \cite{bertoni2024shallow, hu2025demonstration} as follows:
\begin{align}
    m_P
    &=\mathbb{E}_{U}\mathbf{1}\{UPU^{\dagger}\in\pm \mathcal{Z}\}\\
    &=\mathbb{E}_{U_1, U_2}\mathbf{1}\{U_2U_1PU^{\dagger}_1U^{\dagger}_2\in\pm \mathcal{Z}\}\\
    &=\text{Tr}(K_{(\textbf{w}_{k,1}(P))_1}K_{(\textbf{w}_{k,1}(P))_2}...K_{(\textbf{w}_{k,1}(P))_{n/k}}),
\end{align}
where $\mathbf{w}_{k,1}(P)\in\{0,1\}^{n/k}$ is a vector whose $i$-th component is $1$ if $P$ acts non-trivially on the $i$-th block of the $U_1$ layer, and $0$ otherwise. Here, $K_0$ and $K_1$ are defined as follows:
\begin{equation}
    K_0=\frac{1}{2^k+1}
    \begin{pmatrix}
    2^k+1 & 1 \\
    0 & 0
    \end{pmatrix},
    \quad
    K_1=\frac{1}{(2^k+1)^2}
    \begin{pmatrix}
    1 & 1 \\
    2^{k+1} & 2^k
    \end{pmatrix}
\end{equation}
Now, we can show that the minimum of the $m_P$ happens when $P$ acts non-trivially on all the blocks, which means $\textbf{w}_{k,1}(P)=\mathbf{1}_{n/k}$. 
\begin{proof}
We proceed by mathematical induction. Suppose that the matrix $M_{\ell}$ obtained by multiplying $K_0$ or $K_1$ a total of $\ell$ times has the form
\begin{equation}\label{induction}
    M_{\ell} =
    \begin{pmatrix}
    A_{\ell} & B_{\ell} \\
    C_{\ell} & D_{\ell}
    \end{pmatrix}, \text{with}~A_{\ell} \geq B_{\ell} \geq 0~\text{and}~C_{\ell} \geq D_{\ell} \geq 0.
\end{equation}

For $\ell=1$, the statement can be verified directly.  
Assume that it holds for some $\ell$.  
Then, for $\ell+1$, there are two possible cases:
\begin{align}
    &K_0 M_{\ell} = \frac{1}{2^k+1}
    \begin{pmatrix}
    (2^k+1)A_{\ell}+C_{\ell} & (2^k+1)B_{\ell}+D_{\ell} \\
    0 & 0
    \end{pmatrix},\\
    &K_1 M_{\ell} = \frac{1}{(2^k+1)^2}
    \begin{pmatrix}
    A_{\ell}+C_{\ell} & B_{\ell}+D_{\ell} \\
    2^k(2A_{\ell}+C_{\ell}) & 2^k(2B_{\ell}+D_{\ell})
    \end{pmatrix}.
\end{align}
If the induction hypothesis holds for $M_{\ell}$, it can be easily verified to hold for $M_{l+1}$. Therefore, any matrix obtained as a product of a sequence of $K_0$ and $K_1$ satisfies Eq.~\eqref{induction}.
Using this, we obtain 
\begin{align}
    \text{Tr}(K_0 M_{\ell}) - \text{Tr}(K_1 M_{\ell})&=\frac{1}{(2^k+1)^2}\left((2^k+1)^2A_{\ell}+(2^k+1)C_{\ell}-A_{\ell}-C_{\ell}-2^k(2B_{\ell}+D_{\ell})\right)\\
    &=\frac{1}{(2^k+1)^2}\left(2^{2k}A_{\ell}+2^{k+1}(A_{\ell}-B_{\ell})+2^k(C_{\ell}-D_{\ell}) \right)\geq 0,
\end{align}
and hence the presence of $K_0$ in $M_{\ell}$ always gives a larger value compared to the case without it. Therefore, the minimum of $m_P$ occurs when $\textbf{w}_{k,1}(P)=(1,1,\cdots,1)$.
\end{proof}
Using the above result, a minimum of $m_P$ is given by
\begin{align}
    \min_P m_P &= \text{Tr}(K_1^{n/k})\\
    &=\mu_{+}^{n/k}+\mu_{-}^{n/k},
\end{align}
where $\mu_{\pm}$ are eigenvalues of $K_1$ and are given by
\begin{equation}
    \mu_{\pm}=\frac{z+1\pm \sqrt{z^2+6z+1}}{2(z+1)^2},
\end{equation}
where $z=2^k$. Then, the maximum of $m_P^{-1}$ is as follows: 
\begin{align}
    \max_P m_P^{-1} &= \frac{1}{\mu_{+}^{n/k}+\mu_{-}^{n/k}}\\
    &\leq\frac{1}{|\mu_{+}|^{n/k}-|\mu_{-}|^{n/k}}\\
    &=\frac{1}{|\mu_{+}|^{n/k}(1-|\mu_{-}/\mu_{+}|^{n/k})}\\
    &\leq {\cal O}(1/|\mu_{+}|^{n/k})
\end{align}
The last line holds for sufficiently large $k$, as $\left|\frac{\mu_{-}}{\mu_{+}}\right|=\Theta(2^{-k})$. And $1/|\mu_{+}|$ is bound on
\begin{align}
    1/|\mu_{+}|&\leq z+\frac{3}{z+1}\\
    &=2^k+\frac{3}{2^k+1}.
\end{align}
Therefore, when $k=\tfrac{1}{2}\log(n)$, we obtain
\begin{align}
    \max_P m_P^{-1} &\leq \left(2^k+\frac{3}{2^k+1}\right)^{n/k} \\
    &\leq 2^n\left(1+\frac{3}{2^{2k}}\right)^{n/k} \\
    &\leq \mathcal{O}(2^n).
\end{align}
Based on this result, we can now prove the concentration inequality.  

Let an unbiased estimator $\hat{\rho}_T$ be defined as
\begin{align}
    \hat{\rho}_T &= \frac{1}{N_U}\sum_{i=1}^{N_U}\hat{\rho}_{U_i}, \\
    \hat{\rho}_{U} &= \frac{1}{N_S}\sum_{j=1}^{N_S}\mathcal{M}^{-1}(U^{\dagger}\ket{b_j}\bra{b_j}U),
\end{align}
where $N_U$ denotes the number of sampled unitaries, $N_S$ denotes the number of repeated measurements for a given unitary $U$, and $T=N_UN_S$.  

As before, the Frobenius norm
\[
    f(\rho)=f(\rho;\mathbf{X})=\|\hat{\rho}_T-\rho\|_{\text{F}}
\]
can be bounded using McDiarmid's inequality. Let $\ell\!\coloneqq\!\max_P m_P^{-1}$. Then, the following two properties hold:
\begin{align}
    &1.\ \mathbb{E}[f(\rho)] \leq \sqrt{\frac{2^n L}{N_UN_S}+\frac{\ell}{N_U}}, \\
    &2.\ f(\rho;\mathbf{X})-f(\rho;\mathbf{X}^{(i)}) \leq \frac{2L}{N_UN_S}.
\end{align}
Applying these two properties to McDiarmid's inequality, we obtain
\begin{equation}
    \Pr\!\left(f(\rho)\geq \epsilon_{\text{F}}\right) \leq \delta
\end{equation}
whenever
\begin{equation}
    N_U N_S = \frac{\ell}{\epsilon_{\text{F}}^2}\left(\sqrt{2^n+N_S}+L^{1/2}\ln(2/\delta)\right)^2.
\end{equation}
Let us assume $N_S=\Theta(2^n)$, then 
\begin{align}
    N_UN_S &=\mathcal{O}\left(\frac{\ell}{\epsilon_{\text{F}}^2}\text{max}(2^n,L)\right)\\
    &=\mathcal{O}\left(\frac{L^2}{\epsilon_{\text{F}}^2}\right).
\end{align}
Therefore, by taking $\epsilon_{\text{F}}=\epsilon_{\text{tr}}/\sqrt{2^n}$, we can ensure that 
\begin{equation}
    \text{Pr}(\|\hat{\rho}_T-\rho\|_{\text{tr}}\geq \epsilon_{\text{tr}}) \leq \delta,
\end{equation}
when $T=N_UN_S=\mathcal{O}(\frac{d^3}{\epsilon^2})$, $N_S=\mathcal{O}(2^n)$ and $k=\frac{1}{2}\log n$. Note that $k=\frac{1}{2}\log n$ in the two-layer brickwork unitary ensemble has the same circuit volume as $k=\log n$ in the block unitary ensemble.

\section{NUMERICAL SIMULATIONS}\label{appx:numerical_simulations}

In this appendix, we describe the numerical simulations presented in Fig.~\ref{fig:numerics}. 
All simulations were performed for $n=8$ qubits, so that $d=2^n=256$. 
Random target states were generated using the \texttt{random\_density\_matrix} function in \texttt{qiskit.quantum\_info}, and random Clifford circuits were generated using \texttt{random\_clifford}~\cite{qiskit2024}. 
The simulations include statistical shot noise from finite measurement samples, but do not include gate errors or measurement errors.

For the rank-$r$ simulations in Fig.~\ref{fig:numerics}(a), we considered $r=1$ and $r=32$. 
The measurement unitaries were sampled either from the global Clifford ensemble $\mathrm{Cl}(n)$ or from the two-layer block Clifford ensemble shown in Fig.~\ref{fig2}(b). 
For the two-layer ensemble, we used block size $k=2$, and each $k$-qubit block was sampled independently from $\mathrm{Cl}(k)$. 
For each sampled measurement basis $U_i$, we performed $N_S$ repeated measurements. 
Although the proof of Theorem~\ref{thm1:rankr} is stated for the single-shot setting, i.e., $N_S=1$, grouping several shots under the same basis is convenient for numerical post-processing. 
The estimator used in the simulation was
\begin{align}
    \hat{\rho}_T
    =
    \frac{1}{N_U}\frac{1}{N_S}\sum_{i=1}^{N_U}\sum_{j=1}^{N_S}
    \mathcal{M}^{-1}\!\left(
    U_i^{\dagger}\ket{b_{ij}}\!\bra{b_{ij}}U_i
    \right),
\end{align}
where $b_{ij}$ is the $j$-th measurement outcome obtained from the basis $U_i$, and $T=N_U N_S$ is the total number of samples. 
In the simulations, we fixed $N_U=4000$ and varied $N_S$ to obtain different values of $T$. 
After forming the empirical estimator, we projected it onto the set of rank-$r$ positive semidefinite density matrices, following the standard post-processing used for low-rank QST~\cite{guctua2020fast}. 
Concretely, we diagonalized the Hermitian part of the estimator, kept the leading $r$ eigenvectors, projected the corresponding eigenvalues onto the probability simplex, and reconstructed a trace-one positive semidefinite matrix.

For the full-rank simulations in Fig.~\ref{fig:numerics}(b), we used the one-layer block Clifford ensemble shown in Fig.~\ref{fig2}(a). 
We considered block sizes $k\in\{1,2,4,8\}$. 
For each $k$, the measurement bases were chosen as tensor products of a complete set of mutually unbiased bases on each $k$-qubit block, giving
$N_U = (2^k+1)^{n/k}$ measurement bases in total. 
The estimator was the Pauli-coefficient estimator in Eq.~\eqref{eq:estimator_QST2}. 
For each $k$, the number of repeated measurements $N_S$ was chosen so that the total sample number $T=N_U N_S$ was comparable across different block sizes.

For both panels, the reconstruction error was measured by the trace norm 
$\|\rho-\hat{\rho}\|_{\mathrm{tr}}$. 
Each data point represents the mean over $10$ independent trials, and the shaded region indicates one standard deviation.

\section{RUNTIME ANALYSES}\label{appx:run_time}

In this appendix, we analyze the classical post-processing time required to construct an explicit representation of the estimator used in the two QST protocols. 
The runtime depends on the data type in which the measurement results are supplied. 
Throughout this section, we assume the standard experimental data format in which, for each measurement circuit $U_i$, the $N_S$ repeated measurement outcomes are stored as a histogram
$h_i:\{0,1\}^n\to\mathbb{N}$, where $h_i(x)$ is the number of times the bitstring $x$ is observed. 
Equivalently, the input is a list of pairs $(U_i,h_i)$ for $i=1,\ldots,N_U$. 
This is the natural output format of present cloud-based quantum-computing platforms: for example, \texttt{Qiskit} stores circuit-execution results as a counts dictionary whose keys are measured bitstrings and whose values are the corresponding shot counts~\cite{qiskit2024,qiskit_counts_docs}. 
If one instead starts from a raw list of all $T=N_UN_S$ shots, then an additional input-reading and histogram-construction cost of order at least $\Omega(T)$ is unavoidable. 
The runtime estimates below refer to the post-processing cost after this standard histogram aggregation has been performed.

We represent operators in the Pauli transfer matrix (PTM) representation, i.e., as vectors of Pauli coefficients. 
Let $\mathcal{P}_n=\{I,X,Y,Z\}^n$ denote the $n$-qubit Pauli basis. 
For an $n$-qubit operator $A$, we define its PTM vector by
\begin{equation}
    |A\rangle\!\rangle
    =
    \sum_{P\in\mathcal{P}_n}
    \alpha_P(A)|P\rangle\!\rangle,
    \qquad
    \alpha_P(A)=\operatorname{Tr}(PA),
    \label{eq:runtime_ptm_vector}
\end{equation}
where $|P\rangle\!\rangle$ denotes the normalized basis vector associated with the Pauli operator $P$. 
Equivalently, the operator $A$ is recovered from its PTM vector as
$A=2^{-n}\sum_{P\in\mathcal{P}_n}\alpha_P(A)P$.
Thus a PTM vector $|\rho\rangle\!\rangle$ has length $d^2$. 
When an explicit density-matrix representation is required, the conversion from the PTM representation to the computational-basis matrix representation can be performed using standard fast Walsh--Hadamard-transform-type routines in $\widetilde{O}(d^2)$ time and $O(d^2)$ memory~\cite{fino1976unified}. 
This cost is unavoidable up to logarithmic factors if one explicitly outputs a dense $d\times d$ matrix.

For each fixed Clifford measurement circuit $U_i$, computational-basis measurements provide estimates of only $d$ Pauli coefficients.
Let $Z^a=\bigotimes_{\ell=1}^n Z_\ell^{a_\ell}$ for $a\in\{0,1\}^n$.
Since $U_i$ is Clifford, for each $a$ there exist $P_{i,a}\in\{I,X,Y,Z\}^n$ and $s_{i,a}\in\{\pm1\}$ such that
\begin{equation}
    U_iP_{i,a}U_i^\dagger=s_{i,a}Z^a.
    \label{eq:runtime_actionable_pauli}
\end{equation}
Thus, $\{P_{i,a}\}_{a\in\{0,1\}^n}$ are precisely the $d$ Pauli observables whose expectation values can be estimated from measurements in the basis specified by $U_i$. Then the empirical estimate of the corresponding expectation value is
\begin{equation}
    \widehat{z}_i(a)
    =
    \sum_{x\in\{0,1\}^n}
    \frac{h_i(x)}{N_S}(-1)^{a\cdot x},
    \qquad
    \widehat{\operatorname{Tr}(\rho P_{i,a})}
    =
    s_{i,a}\widehat{z}_i(a).
    \label{eq:appx_fwht}
\end{equation}
All values $\widehat{z}_i(a)$ can be computed from the histogram $h_i$ using one fast Walsh--Hadamard transform (FWHT) in $\widetilde{O}(d)$ time. 
Since the circuits are Clifford, the map $Z^a\mapsto U_i^\dagger Z^aU_i$ can also be enumerated for all $a$ in $\widetilde{O}(d)$ time using a stabilizer tableau representation. 
Therefore, each measurement basis contributes to the PTM vector in $\widetilde{O}(d)$ time, independently of $N_S$ once the outcomes have been stored as a histogram.

We first consider the rank-$r$ protocol of Theorem~\ref{thm1:rankr}. 
For multi-shot setting, the estimator can be written as
\begin{equation}
    \hat{\rho}_T
    =
    \frac{1}{T}
    \sum_{i=1}^{N_U}
    \sum_{j=1}^{N_S}
    \mathcal{M}^{-1}
    \left(
    U_i^\dagger |b_{ij}\rangle\!\langle b_{ij}|U_i
    \right),
    \qquad
    T=N_UN_S.
    \label{eq:runtime_rankr_estimator}
\end{equation}
The proof of Theorem~\ref{thm1:rankr} is stated for the single-shot setting $N_S=1$, but Eq.~\eqref{eq:runtime_rankr_estimator} is the natural multi-shot implementation of the same unbiased estimator. 
Because the measurement ensemble consists of Clifford circuits, the shadow channel is diagonal in the Pauli basis, $\mathcal{M}(P)=m_PP$, and hence applying $\mathcal{M}^{-1}$ amounts to multiplying each Pauli coefficient by $m_P^{-1}$. 
The coefficients $m_P$ for the shallow brickwork ensemble can be computed by the transfer-matrix method described in Appendix~\ref{twolayer}. 
For an explicit PTM output, precomputing or storing the required diagonal factors costs at most $\widetilde{O}(d^2)$ time and $O(d^2)$ memory, which is not larger than the cost of explicitly storing the final estimator.

Using Eq.~\eqref{eq:appx_fwht}, for each measurement basis $U_i$ we first apply the FWHT to the histogram $h_i$ to obtain all $\widehat{z}_i(a)$ in $\widetilde{O}(d)$ time. 
We then multiply the corresponding Clifford signs $s_{i,a}$ as a post-processing step and update the PTM coefficient associated with $P_i^a$. 
Combining this per-basis FWHT update with the diagonal action of $\mathcal{M}^{-1}$, the post-processing time for constructing the PTM representation of the estimator used in the rank-$r$ case is
\begin{equation}
    \mathrm{Time}_{r}
    =
    \widetilde{O}(N_Ud+d^2),
    \qquad
    \mathrm{Memory}_{r}=O(d^2).
    \label{eq:runtime_rankr_general}
\end{equation}
In the single-shot implementation used in Theorem~\ref{thm1:rankr}, we have $N_U=T$ and $N_S=1$. 
Using the sample complexity $T=\widetilde{O}(dr^2/\epsilon^2)$, Eq.~\eqref{eq:runtime_rankr_general} gives
\begin{equation}
    \mathrm{Time}_{r}
    =
    \widetilde{O}\!\left(\frac{d^2r^2}{\epsilon^2}\right),
    \qquad
    \mathrm{Memory}_{r}=O(d^2).
    \label{eq:runtime_rankr_final}
\end{equation}
For the final rank-\(r\) reconstruction, we use the randomized low-rank
approximation procedure in Fact~\ref{fact:block_krylov}. Applied to the dense
estimator \(A=\hat{\rho}_T\in\mathbb C^{d\times d}\) with target rank \(k=r\)
and constant algorithmic accuracy $\epsilon_{\rm alg}$, this step computes \(Z\) and forms
\[
    \hat{\rho}_T^{\rm out}=ZZ^\dagger \hat{\rho}_T
\]
in \(\widetilde O(d^2r)\) time. This avoids the \(O(d^3)\) worst-case cost of obtaining the projected
least-squares (PLS) estimator~\cite{guctua2020fast}, which would require a full
eigendecomposition.
Hence the final reconstruction cost is dominated by
Eq.~\eqref{eq:runtime_rankr_final}.

We note that the runtime bound in Eq.~\eqref{eq:runtime_rankr_final} already improves on the asymptotic scaling of the previously known fast QST algorithm~\cite{brandao2020fast}, even before exploiting the repeated-shot structure. A more refined analysis in the multi-shot setting can further reduce the post-processing cost, since the reconstruction can be performed from per-basis histograms rather than from individual measurement outcomes.

We next consider the full-rank protocol of Theorem~\ref{thm2:fullrank}. 
In this case, the estimator is built directly from Pauli coefficients and does not require applying the inverse shadow channel. 
For each Pauli operator $P$, let $N_P$ be the number of measurement bases that provide actionable information about $P$, and let $\mu_P$ be the sum of the corresponding $\pm1$ outcomes over all repeated shots. 
The estimator is
\begin{equation}
    \hat{\rho}
    =
    \sum_P
    \frac{\mu_P}{2^n N_SN_P}P.
    \label{eq:runtime_fullrank_estimator}
\end{equation}
For the block measurement ensemble of Fig.~\ref{fig2}(a), the number of bases is $N_U=(2^k+1)^{n/k}$. 
As in the rank-$r$ case, one FWHT per basis computes all $d$ Pauli estimates in $\widetilde{O}(d)$ time. 
The different normalization factors $N_P^{-1}$ depend only on the block support $w_k(P)$ and can be applied while updating the PTM vector.
Therefore, for histogram input, the post-processing time for constructing the estimator used in the full-rank case in the PTM representation is
\begin{equation}
    \mathrm{Time}_{\mathrm{full}}
    =
    \widetilde{O}(N_Ud),
    \qquad
    \mathrm{Memory}_{\mathrm{full}}=O(d^2).
    \label{eq:runtime_fullrank_general}
\end{equation}
Choosing $k=\log n$ gives $N_U=(2^k+1)^{n/k}=\mathcal{O}(d)$. 
Thus Eq.~\eqref{eq:runtime_fullrank_general} becomes
\begin{equation}
    \mathrm{Time}_{\mathrm{full}}
    =
    \widetilde{O}(d^2),
    \qquad
    \mathrm{Memory}_{\mathrm{full}}=O(d^2).
    \label{eq:runtime_fullrank_final}
\end{equation}
This is nearly optimal for an explicit PTM or density-matrix output, since merely writing down $d^2$ coefficients already requires $\Omega(d^2)$ time and memory. 
The separation between the sample complexity and the post-processing time comes from the repeated-shot structure of the full-rank protocol. 
Although the protocol uses $T=\mathcal{O}(d^3/\epsilon^2)$ samples in total, these samples are grouped into $N_U=\mathcal{O}(d)$ distinct measurement bases, with many repetitions per basis. 
After the outcomes for each basis have been aggregated into a histogram $h_i$, the reconstruction uses $h_i$ rather than the individual shot outcomes.

\end{document}